\newif\if@restonecol
\spnewtheorem{hypothesis}[theorem]{Hypothesis}{\bfseries}{\itshape}
\spnewtheorem{observation}[theorem]{Observation}{\bfseries}{\rmfamily}
\spnewtheorem{example2}[theorem]{Example}{\itshape}{\rmfamily}
\spnewtheorem{definition2}[theorem]{Definition}{\bfseries}{\rmfamily}
\spnewtheorem{remark2}[theorem]{Remark}{\bfseries}{\itshape}
\spnewtheorem{fact}[theorem]{Fact}{\bfseries}{\itshape}
\def\ok#1{\mbox{\raisebox{0ex}[1ex][1ex]{$#1$}}}
\def \tuple#1{\langle #1 \rangle}
\newcommand{\sra}{{\shortrightarrow}}
\newcommand{\ECTL}{\ensuremath{\mathrm{ECTL}}}
\newcommand{\CTL}{\ensuremath{\mathrm{CTL}}}
\newcommand{\CTLS}{\ensuremath{\mathrm{CTL\!}^*}}
\newcommand{\CTLX}{\ensuremath{\mathrm{CTL}\mbox{-}\mathrm{X}}}
\newcommand{\ECTLXG}{\ensuremath{\mathrm{ECTL}\mbox{-}\{\mathrm{X},\mathrm{G}\}}}
\newcommand{\CTLSX}{\ensuremath{\mathrm{CTL\!}^*\mbox{-}\mathrm{X}}}
\newcommand{\id}{\mathrm{id}}
\newcommand{\TS}{\mathrm{TS}}
\newcommand{\pos}{{\mathbf{pos}}}
\newcommand{\ud}{\triangleq}
\newcommand{\ra}{\rightarrow}
\newcommand{\raee}{\sra^{\exists}}
\newcommand{\tle}{\trianglelefteq}
\newcommand{\Lra}{\Leftrightarrow}
\newcommand{\Ra}{\Rightarrow}
\newcommand{\La}{\Leftarrow}
\newcommand{\cK}{{\mathcal{K}}}
\newcommand{\cS}{{\mathcal{S}}}
\newcommand{\cP}{{\mathcal{P}}}
\newcommand{\Pstsim}{\ensuremath{P_{\mathrm{stsim}}}}
\newcommand{\Rstsim}{\ensuremath{R_{\mathrm{stsim}}}}
\newcommand{\SSA}{\ensuremath{\mathit{SSA}}}
\newcommand{\BasicSSA}{\ensuremath{\mathit{BasicSSA}}}
\def\grasse#1{[\![#1]\!]}
\newcommand{\tl}{\vartriangleleft}
\DeclareMathOperator{\StSim}{StSim}
\DeclareMathOperator{\Refiner}{Refiner}
\DeclareMathOperator{\refiner}{Refiner}
\DeclareMathOperator{\bottomBlock}{bottomBlock}
\DeclareMathOperator{\localBottom}{localBottom}
\DeclareMathOperator{\bottom}{Bottom}
\DeclareMathOperator{\Bottom}{Bottom}
\DeclareMathOperator{\Rel}{\mathit{Rel}}
\DeclareMathOperator{\pre}{pre}
\DeclareMathOperator{\Part}{Part}
\DeclareMathOperator{\Split}{\mathit{Split}}
\DeclareMathOperator{\parent}{parent}
\newcommand{\PR}{\KwSty{PR}}
\newcommand{\cbool}{\KwSty{bool}\ }
\newcommand{\cstate}{\KwSty{State}}
\newcommand{\cblock}{\KwSty{Block}}
\newcommand{\cint}{\KwSty{int}\ }
\newcommand{\creturn}{\KwSty{return}\ }
\newcommand{\cmaybe}{\KwSty{maybe}}
\newcommand{\ctt}{\KwSty{tt}}
\newcommand{\cff}{\KwSty{ff}}
\newcommand{\cnull}{\KwSty{null}}
\newcommand{\cvoid}{\KwSty{void}\ }
\def\ctemplate#1#2{#1$\langle$#2$\rangle$}
\def\clist#1{\ctemplate{\KwSty{list}}{\KwSty{#1}}}
\def\cmatrix#1{\ctemplate{\KwSty{matrix}}{\KwSty{#1}}}
\begin{document}

\title{Computing Stuttering Simulations}
\author{Francesco Ranzato~~~ Francesco Tapparo}
\institute{Dipartimento di Matematica Pura ed Applicata\\
        Universit\`a di Padova, Italy\\
}

\pagestyle{plain}
\date{}
\maketitle

\begin{abstract}
  Stuttering bisimulation is a well-known behavioral equivalence 
  that preserves $\CTLX$, namely $\CTL$ without the
  next-time operator $\mathrm{X}$. Correspondingly, the stuttering
  simulation preorder induces a coarser behavioral equivalence that
  preserves the existential fragment $\ECTLXG$, namely $\ECTL$ without
  the next-time $\mathrm{X}$ and globally $\mathrm{G}$ operators.
  While stuttering bisimulation equivalence can be computed by the
  well-known Groote and Vaandrager's [1990] algorithm, to the best of
  our knowledge, no algorithm for computing the stuttering simulation
  preorder and equivalence is available.  This paper presents such an
  algorithm for finite state systems.
\end{abstract}

\section{Introduction}
\paragraph{\textbf{The Problem.}} 
Lamport's criticism \cite{lam83} 
of the next-time operator $\mathrm{X}$
in $\CTL$/$\CTLS$ arouse the interest in studying temporal
logics like $\CTLX$/$\CTLSX$, 
obtained  from $\CTL$/$\CTLS$ by removing the next-time operator, and 
related notions of behavioral \emph{stuttering}-based
equivalences~\cite{bcg88,dnv95,gv90}.  
We are interested here in
\emph{divergence blind stuttering} simulation and bisimulation, that
we call, respectively, stuttering simulation and bisimulation for
short.
We focus here on systems specified as Kripke structures (KSs), but analogous 
considerations  hold for labeled transition systems (LTSs).  
Let $\cK = \tuple{\Sigma,\sra,\ell}$ be a KS where
$\tuple{\Sigma,\sra}$ is a transition system and $\ell$ is a
state labeling function. A relation $R\subseteq \Sigma\times \Sigma$
is a stuttering simulation on $\cK$ when for any
$s,t\in \Sigma $ such that $(s,t)\in R$: (1) $s$ and $t$ have the same
labeling by $\ell$ and (2) if  
$\ok{s\sra s'}$ then $t\sra^* t'$ for some $t'$ in such a way that the following
diagram holds:
\begin{center}
    \begin{tikzpicture}[scale=0.6]
      \tikzstyle{arrow}=[->,>=latex']
      \draw (6,2) node {$s$};
      \draw (7,2) node {$\sra$};
      \draw (8,2.06) node {$s'$};
      \draw (0,0) node {$t$};
      \draw (5,1) node {$\cdots$};
      \draw (1,0) node {$\sra$};
      \draw (2,-0.06) node {$t_1$};
      \draw (3,0) node {$\sra$};
      \draw (4,0) node {$\cdots$};
      \draw (5,0) node {$\sra$};
      \draw (6,-0.06) node {$t_k$};
      \draw (7,0) node {$\sra$};
      \draw (8,0.03) node {$t'$};
      \path
      (6,1.9) node[name=62]{}
      (8,1.9) node[name=82]{}
      (0,0.05) node[name=00]{}
      (2,0.1) node[name=20]{}
      (6,0) node[name=60]{}
      (8,0) node[name=80]{};

      \draw[dotted, semithick] (62) -- (00);
      \draw[dotted, semithick] (62) -- (20);
      \draw[dotted, semithick] (62) -- (60);
      \draw[dotted, semithick] (82) -- (80);
\end{tikzpicture}
\end{center}
\noindent
where a dotted line between two states means that they are related
by $R$. The intuition is that $t$ is allowed to 
simulate a transition $s\sra s'$ possibly through  
some initial ``stuttering'' transitions ($\tau$-transitions in case of LTSs). 
$R$ is called a stuttering bisimulation when it is symmetric.
It turns out that the largest stuttering simulation $R_{\mathrm{stsim}}$ and 
bisimulation $R_{\mathrm{stbis}}$ relations 
exist: $R_{\mathrm{stsim}}$ is a preorder called the
\emph{stuttering simulation preorder} while $R_{\mathrm{stbis}}$ is an
equivalence relation called the stuttering bisimulation equivalence. Moreover, 
the preorder $R_{\mathrm{stsim}}$ induces by symmetric reduction the
\emph{stuttering simulation equivalence} 
$R_{\mathrm{stsimeq}}=R_{\mathrm{stsim}}\cap
R^{-1}_{\mathrm{stsim}}$. The partition of $\Sigma$
corresponding to the equivalence $R_{\mathrm{stsimeq}}$ is denoted
by $P_{\mathrm{stsim}}$. 

De Nicola and Vaandrager \cite{dnv95} showed that for finite KSs
 and for an interpretation of universal/existential path
quantifiers over all the, possibly finite, prefixes, the stuttering
bisimulation equivalence coincides with the state equivalence induced
by the language $\CTLX$ (this also holds for $\CTLSX$).  This is not
true with the standard interpretation of path quantifiers over
infinite paths, since this requires a divergence sensitive notion of
stuttering (see the details in \cite{dnv95}). Groote and
Vaandrager~\cite{gv90} 
designed a well-known algorithm that computes
the stuttering bisimulation equivalence $R_{\mathrm{stbis}}$ in
$O(|\Sigma| |\sra|)$-time and $O(|\sra|)$-space.

Clearly, stuttering simulation equivalence
is coarser than stuttering bisimulation, i.e.\
$R_{\mathrm{stbis}} \subseteq R_{\mathrm{stsimeq}}$. As far as language
preservation is concerned, it turns out that  stuttering simulation
equivalence coincides with the state equivalence induced
by the language $\ECTLXG$, namely the existiential fragment of $\CTL$
without next-time and globally operators $\mathrm{X}$ and
$\mathrm{G}$. Thus, on the one hand, 
stuttering simulation equivalence still preserves a
significantly expressive 
fragment of $\CTL$ and, on the other hand, 
it may provide a significantly better state space
reduction  than simulation equivalence, and this has been shown to be 
useful in abstract model checking~\cite{man01,ngc05}.

\paragraph{\textbf{State of the Art.}} To the best of our
knowledge, there exists no 
algorithm for computing stuttering simulation
equivalence or, more in general, the stuttering simulation
preorder. There is instead an algorithm by Bulychev et
al.~\cite{bkz07} for \emph{checking} stuttering
simulation, namely, this procedure checks whether a given relation
$R\subseteq \Sigma\times \Sigma$ is a stuttering simulation. This
algorithm formalizes the problem of checking 
stuttering simulation as a two players
game in a straightforward way and then exploits Etessami et
al.'s~\cite{ews01} algorithm for solving such a game. The authors
claim that this provides an
algorithm for checking stuttering simulation on finite KSs
 that runs in
$O(|\sra|^2)$ time and space. 

\paragraph{\textbf{Main Contributions.}}
In this paper we present an algorithm for computing simultaneously 
both the simulation preorder  $R_{\mathrm{stsim}}$ and stuttering
simulation equivalence $R_{\mathrm{stsimeq}}$ for finite KSs. 
This procedure is incrementally designed in two steps. 
We first put forward a basic procedure for computing the stuttering
simulation preorder that relies directly on the notion of stuttering
simulation.  
For any state $x\in \Sigma$, $\StSim(x)\subseteq \Sigma$ represents the
set of states that are candidate to stuttering simulate $x$ so that
a family of sets $\{\StSim(x)\}_{x\in \Sigma}$ is maintained. 
A pair of states $(x,y)\in \Sigma\times \Sigma$ is called a refiner
for $\StSim$ when $x\sra y$ and there exists $z\in \StSim(x)$ that
cannot stuttering simulate $x$ w.r.t.\ $y$, i.e., $z\not\in
\pos(\StSim(x),\StSim(y))$ where $\pos(\StSim(x),\StSim(y))$ is the
set of all the states in $\StSim(x)$ that may reach a state in
$\StSim(y)$ through a path of states in $\StSim(x)$.  Hence, any such 
$z$ can be correctly removed from $\StSim(x)$. 
Actually, it turns out that one such 
refiner $(x,y)$ allows to 
refine $\StSim$ to $\StSim'$ as follows: if
$S=\pos(\StSim(x),\StSim(y))$  then
$$\StSim'(w) := \left\{\begin{array}{ll}
                      \StSim(w)\cap S  & \mbox{~~if $w\in S$}\\
                      \StSim(w) & \mbox{~~if $w\not\in S$}\\
                      \end{array}\right.
$$
Thus, our basic algorithm consists in 
initializing $\{\StSim(x)\}_{x\in \Sigma}$ as $\{ y\in
  \Sigma~|~ \ell(y)=\ell(x)\}_{x\in
  \Sigma}$ and  then iteratively
refining $\StSim$ until a refiner exists.  
This provides an \emph{explicit} stuttering simulation algorithm, meaning
that this procedure requires that for any explicit state $x\in \Sigma$, 
$\StSim(x)$ is explicitly represented as a set of states.

Inspired by techniques used in 
algorithms that compute standard simulation preorders
and equivalences (cf.\ Henzinger et al.~\cite{hhk95} and Ranzato and
Tapparo~\cite{rt07}) and in abstract interpretation-based algorithms
for computing strongly preserving abstract models~\cite{rt07b}, 
our stuttering simulation algorithm
$\SSA$ is obtained by the above basic procedure by exploiting the
following two main 
ideas.

\begin{itemize}
\item[{\rm (1)}] The above explicit algorithm is made ``symbolic'' by
representing the family of sets of states $\{\StSim(x)\}_{x\in
  \Sigma}$ as a family of sets of blocks of a partition $P$ of the
state space $\Sigma$. More precisely, we maintain a partition $P$ of
$\Sigma$ together with a binary relation $\tle \: \subseteq P\times
P$~---~a so-called partition-relation pair~---~so that: (i)~two states
$x$ and $y$ in the same block of $P$ are candidate to be stuttering
simulation equivalent and (ii)~if $B$ and $C$ are two blocks of $P$
and $B\tle C$ then any state in $C$ is candidate to stuttering
simulate each state in $B$. Therefore, here, for any $x\in \Sigma$, if
$B_x\in P$ is the block of $P$ that contains $x$ then
$\StSim(x) =\StSim(B_x)= \cup \{C\in P~|~B_x\tle C\}$. 

\item[{\rm (2)}] In this setting, a refiner of the current partition-relation
$\tuple{P,\tle}$ is a pair of blocks $(B,C)\in P\times P$ 
such that $B\raee C$ and 
$\StSim(B)\not\subseteq \pos(\StSim(B),\StSim(C))$, where 
$\sra^{\exists}$ is the existential
transition relation between blocks of $P$, i.e.,
$B\raee C$ iff  there exist $x\in B$ and $y\in C$ such
that $x\sra y$.
We devise an efficient way for finding a refiner of the current
partition-relation pair that allows us to check whether a given
preorder $R$ is a stuttering simulation in $O(|P||\sra|)$ time and 
$O(|\Sigma||P|\log |\Sigma|)$ space, 
where $P$ is the partition corresponding to the
equivalence $R\cap R^{-1}$. 
Hence, this 
algorithm for checking stuttering
simulation already significantly improves both in time and space 
Bulychev et al.'s~\cite{bkz07} procedure. 
\end{itemize}

Our algorithm $\SSA$ iteratively refines the current partition-relation
pair $\tuple{P,\tle}$ by first splitting the partition $P$ and then by
pruning the relation $\tle$ until a fixpoint is reached. 
Hence, $\SSA$ outputs a  partition-relation
pair \mbox{$\tuple{P,\tle}$} where
$P = P_{\mathrm{stsim}}$ and $y$ stuttering simulates $x$ iff $P(x)
\tle P(y)$, where $P(x)$ and $P(y)$ are the blocks  
of $P$ that contain, respectively, $x$ and $y$. 
As far as complexity is concerned, it turns out that  
$\SSA$ runs in $O(|P_{\mathrm{stsim}}|^2(|\sra| + 
|P_{\mathrm{stsim}}||\sra^{\exists}|))$ time and
$O(|\Sigma||P_{\mathrm{stsim}}| \log |\Sigma|)$ space. 
 It is worth
remarking that stuttering simulation yields a rather coarse equivalence
so that $|\Pstsim|$ should be in general much less than the
size $|\Sigma|$ of the concrete state space. 


\section{Background}\label{background}
 
\paragraph{\textbf{Notation.}}
If $R\subseteq \Sigma\times \Sigma$ is any relation 
and $x\in \Sigma$ then $R(x)\,\ok{\ud}\, \{x'\in
\Sigma~|~ (x,x')\in R\}$. 
Let us recall that $R$ is called a preorder when it is reflexive and
transitive. 
If $f$ is a function defined on $\wp(\Sigma)$ and $x\in \Sigma$ then
we often write $f(x)$ to mean $f(\{x\})$. 
A partition $P$ of a set $\Sigma$ is a set of nonempty subsets of
$\Sigma$, called blocks, that are pairwise disjoint and whose union
gives $\Sigma$.  $\Part(\Sigma)$ denotes the set of partitions of
$\Sigma$.  If $P\in \Part(\Sigma)$ and $s\in \Sigma$ then $P(s)$
denotes the block of $P$ that contains $s$. $\Part(\Sigma)$ is endowed
with the following standard partial order $\preceq$: $P_1 \preceq
P_2$, i.e.\ $P_2$ is coarser than $P_1$, iff
$\forall B\in P_1. \exists B' \in P_2 . \; B \subseteq B'$.  
For a given nonempty subset $S\subseteq \Sigma$ called splitter, we
denote by $\Split(P,S)$ the partition obtained from $P$ by replacing
each block $B\in P$ with the nonempty sets $B\cap S$ and
$B\smallsetminus S$, where we also allow no splitting, namely
$\Split(P,S)=P$ (this happens exactly when $S$ is a union of some
blocks of $P$). If $B\in P' = \Split(P,S)$ then we denote by
$\parent_P(B)$
(or simply by $\parent(B)$)
the unique block in $P$  that contains $B$ (this 
may possibly be $B$ itself). 
%
\\
\indent
A transition system $(\Sigma ,\sra)$ consists of a 
set $\Sigma$ of states and a transition relation $\sra 
\subseteq \Sigma \times
\Sigma$. 
The predecessor
transformer $\pre:\wp(\Sigma)\ra \wp(\Sigma)$  is
defined as usual:
$\pre (Y) \,\ok{\ud}\, \{ s\in \Sigma ~|~\exists t\in Y.\; s \sra
t\}$.
If $S_1,S_2\subseteq \Sigma$ then $S_1
\raee S_2$ iff there exist $s_1\in S_1$ and $s_2\in
S_2$ such that $s_1 \sra s_2$.  
Given a set $\mathit{AP}$ of atomic propositions
(of some specification language), a Kripke structure (KS) 
$\cK= (\Sigma ,\sra,\ell)$ over
$\mathit{AP}$ consists of a transition system $(\Sigma ,\sra)$
together with a state labeling function $\ell:\Sigma \ra
\wp(\mathit{AP})$. $P_\ell\in \Part(\Sigma)$ denotes 
the state partition induced by $\ell$, namely,
$P_\ell \,\ok{\ud}\, \{\{s'\in
\Sigma~|~ \ell(s)=\ell(s')\}\}_{s\in \Sigma}$.

\paragraph{\textbf{Stuttering Simulation.}}
Let $\cK = (\Sigma ,\sra ,\ell)$ be a KS. 
A
relation $R \subseteq \Sigma \times \Sigma $ is a \emph{divergence blind 
stuttering simulation}
on $\cK$ if for any
$s,t\in \Sigma $ such that $(s,t)\in R$: 
\begin{itemize}
\item[{\rm (1)}] $\ell(s) =\ell (t)$;
\item[{\rm (2)}] If
$\ok{s\sra s'}$ 
then there exist $t_0,...,t_k\in \Sigma$, with $k\geq 0$, such that: (i)
$t_0=t$; (ii)  for all $i\in [0,k)$, $\ok{t_i
\sra  t_{i+1}}$ and $(s,t_i)\in R$; 
(iii) $(s',t_k)\in R$.
\end{itemize}
Observe that condition~(2) allows the case $k=0$
and this boils down to requiring that $(s',t)\in R$.
With a slight abuse of terminology, 
$R$ is called simply a \emph{stuttering simulation}.    
If $(s,t)\in R$ then we say that $t$ stuttering simulates $s$ and we
denote this by $s \leq t$.  If $R$ is a symmetric relation then it is
called a stuttering bisimulation.  
The empty
relation is a stuttering simulation and stuttering 
simulations are closed under
union so that the 
largest stuttering simulation relation exists. It turns out that the largest
simulation is a preorder relation called \emph{stuttering simulation preorder} (on
$\cK$) and
denoted by $R_{\mathrm{stsim}}$. Thus, for any $s,t\in \Sigma$, $s\leq
t$ iff $(s,t)\in  \ok{R_{\mathrm{stsim}}}$.
\emph{Stuttering simulation equivalence} 
$R_{\mathrm{stsimeq}}$ is the symmetric reduction of 
$R_{\mathrm{stsim}}$, namely $R_{\mathrm{stsimeq}} \,\ok{\ud}\, R_{\mathrm{stsim}}
\cap \ok{R_{\mathrm{stsim}}^{-1}}$, so that $(s,t)\in R_{\mathrm{stsimeq}}$
iff $s\leq t$ and $t\leq s$. 
$P_{\mathrm{stsim}}\in \Part(\Sigma)$ denotes
the partition corresponding to the equivalence 
$R_{\mathrm{stsimeq}}$ and is called
stuttering simulation partition. 
\\
\noindent
Following Groote and Vaandrager~\cite{gv90}, 
$\pos:\wp(\Sigma)\times \wp(\Sigma) \sra \wp(\Sigma)$ is defined as:
\vspace*{-10pt}
\begin{multline*}
\pos(S,T) \triangleq\\
\shoveright{ \{s\in S ~|~\exists k\geq 0. \exists
s_0,...,s_k.\; s_0=s\;\&\; \forall i\in [0,k).\, s_i\in S,\: s_i  \sra
s_{i+1}\; \&\; s_k\in T\} }\\[-20pt]
\end{multline*}
so that  a relation $R \subseteq\Sigma\times\Sigma$ 
  is a stuttering simulation iff for any $x,y\in \Sigma$, 
  $R(x)\subseteq
  P_\ell(x)$ and  if $x\sra y$ then
  $R(x)\subseteq \pos(R(x),R(y))$.
\\
\indent
It turns out \cite{dnv95} that $P_{\mathrm{stsim}}$ is the coarsest
partition preserved by the temporal language $\ECTLXG$. More precisely, 
$\ECTLXG$ is inductively defined as follows:
$$\phi ::= 
~p ~ |~ \neg p ~|~ \phi_1 \wedge \phi_2 ~|~ \phi_1 \vee \phi_2 ~|~ 
\mathrm{EU}(\phi_1,\phi_2)$$
and its semantics is standard: 
$\grasse{p} \,\ok{\ud}\, \{s\in \Sigma~|~p\in \ell(s)\}$ and
$\grasse{\mathrm{EU}(\varphi_1,\varphi_2)} \,\ok{\ud}\,
\grasse{\varphi_2} \cup \pos(\grasse{\varphi_1},
\grasse{\varphi_2})$. 
The coarsest partition preserved by $\ECTLXG$ is the state partition
corresponding to the following equivalence $\sim$ between states: for
any $s,t\in \Sigma$,
$$s\sim t ~~\text{iff}~~ \forall \phi\in \ECTLXG.\; s\in \grasse{\phi}
\Lra t\in \grasse{\phi}.$$


\section{Basic Algorithm}

\linesnotnumbered
\begin{algorithm}[Ht]
\small
\printsemicolon
\SetVline
\SetAlTitleFnt{textsc}
\Indm

\FuncSty{{\rm $\BasicSSA (\text{Partition}~ P_\ell)\;\{$}}

\Indp
\lForAll{$x\in \Sigma$}{$\StSim(x) := P_\ell(x)$\;}

\While{$(\exists x,y\in \Sigma ~\KwSty{such that}$ 
$x\sra y \;\&\; \StSim(x) \not\subseteq \pos(\StSim(x),\StSim(y)))$}
{
  $S := \pos(\StSim(x),\StSim(y))$\;
  \lForAll{$w\in S$} {$\StSim(w) := \StSim(w) \cap S$\;} 
}

\Indm
\FuncSty{\}}

\caption{Basic Stuttering Simulation Algorithm $\BasicSSA$.}\label{algouno}
\end{algorithm}

For each state $x\in \Sigma$, the algorithm $\BasicSSA$ 
in Figure~\ref{algouno} 
computes the stuttering simulator set
$\StSim(x)\subseteq \Sigma$, i.e., the set of 
states that stuttering simulate $x$. The basic idea is that
$\StSim(x)$ contains states that are candidate for stuttering
simulating $x$. Thus, the input partition of $\BasicSSA$ is taken as 
the partition $P_\ell$ determined by the labeling $\ell$ so that 
$\StSim(x)$ is initialized with $P_\ell(x)$, i.e.,
with all the states that
have the same labeling of $x$. Following the definition of stuttering
simulation, 
a refiner is a
pair of states $(x,y)$ such that $x \sra y$ and 
$\StSim(x)\not\subseteq  \pos(\StSim(x),\StSim(y))$. In fact, if  
$z\in \StSim(x) \smallsetminus \pos(\StSim(x),\StSim(y))$ 
then $z$ cannot
stuttering simulate $x$ and therefore can be correctly removed from
$\StSim(x)$. 
Conversely, if no such refiner exists then for any
$x,y\in \Sigma$ such that $x\sra y$ we have that 
$\StSim(x) \subseteq  \pos(\StSim(x),\StSim(y))$ so that
any $z\in \StSim(x)$ actually stuttering simulates $x$. 
Hence, $\BasicSSA$ consists in iteratively refining
$\{\StSim(x)\}_{x\in \Sigma}$ as long as a refiner exists, where, 
given a refiner $(x,y)$,  
the refinement of $\StSim$ by means of $S=\pos(\StSim(x),\StSim(y))$ 
is as follows:
$$\StSim(w) := \left\{\begin{array}{ll}
                      \StSim(w)\cap S  & \mbox{~~if $w\in S$}\\
                      \StSim(w) & \mbox{~~if $w\not\in S$}\\
                      \end{array}\right.
$$
It turns
out that this procedure correctly computes the stuttering simulation preorder. 

\begin{theorem}\label{basicssa}
$\BasicSSA$ is correct, i.e., if $\StSim$ is the output 
of $\BasicSSA$ on input $P_\ell$ then for any $x,y\in \Sigma$,
$y\in \StSim(x) \:\Lra\: x\leq y$.  
\end{theorem}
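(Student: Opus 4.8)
The plan is to prove the two implications of the biconditional separately, after first recording two bookkeeping invariants maintained throughout the execution of $\BasicSSA$: (a) $\StSim(x)\subseteq P_\ell(x)$ for every $x$, and (b) $x\in\StSim(x)$ for every $x$. Invariant (a) is immediate, since $\StSim$ is initialized to $P_\ell$ and thereafter only intersected with other sets. For (b), I would argue by induction on the iteration count: whenever $(x,y)$ is a refiner, the inductive hypothesis gives $x\in\StSim(x)$ and $y\in\StSim(y)$, and since $x\sra y$ the length-one path $x\sra y$ witnesses $x\in\pos(\StSim(x),\StSim(y))=S$; hence $x$ (and likewise every $w\in S$) survives the update $\StSim(w):=\StSim(w)\cap S$. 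The same observation shows $x\in S$, while $S=\pos(\StSim(x),\StSim(y))\subsetneq\StSim(x)$ precisely because $(x,y)$ is a refiner; thus $|\StSim(x)|$ strictly decreases at each iteration, which yields termination since $\Sigma$ is finite, so the output is well defined. For soundness, i.e.\ $y\in\StSim(x)\Rightarrow x\le y$, I would note that upon termination the loop guard fails, which says exactly that $\StSim(x)\subseteq\pos(\StSim(x),\StSim(y))$ whenever $x\sra y$; combined with (a) and the characterization recalled in Section~\ref{background} (a relation $R$ is a stuttering simulation iff $R(x)\subseteq P_\ell(x)$ for all $x$ and $R(x)\subseteq\pos(R(x),R(y))$ whenever $x\sra y$), this says the output relation $R\triangleq\{(x,z)\mid z\in\StSim(x)\}$ is a stuttering simulation, so $R\subseteq R_{\mathrm{stsim}}$ since the latter is the largest one.

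The completeness direction $x\le y\Rightarrow y\in\StSim(x)$ is the more delicate one. The naive invariant ``$R_{\mathrm{stsim}}(w)\subseteq\StSim(w)$ for all $w$'' does not visibly survive a refinement step: deriving $z\in S=\pos(\StSim(x),\StSim(y))$ from $w\in S$ and $w\le z$ would require knowing membership in $\StSim(x)$ and $\StSim(y)$ of states \emph{related to}, not equal to, the states on the path witnessing $w\in S$. I would therefore strengthen the invariant to: every $\StSim(w)$ is \emph{upward closed} under the stuttering simulation preorder $\le$ (if $z\in\StSim(w)$ and $z\le z'$ then $z'\in\StSim(w)$). This holds at initialization because $\StSim(w)=P_\ell(w)$ and $z\le z'$ forces $\ell(z)=\ell(z')$ by clause~(1) of the definition. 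Together with invariant (b) it immediately yields the theorem's right-to-left implication at termination: $x\in\StSim(x)$ and $x\le y$ give $y\in\StSim(x)$. So the whole completeness argument reduces to showing that upward closure is preserved by a refinement step, and for this it suffices to prove the key lemma: \emph{if $A,B\subseteq\Sigma$ are upward closed under $\le$, then so is $\pos(A,B)$} — because then $S$ is upward closed (as $\StSim(x),\StSim(y)$ are, by the inductive hypothesis), hence each updated set $\StSim(w)\cap S$, and each unchanged $\StSim(w)$, remains upward closed.

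Proving this lemma is a stuttering path-lifting argument, and I expect it to be the only real work. Given $s\in\pos(A,B)$, $s\le t$, and a witnessing path $s=s_0\sra\cdots\sra s_k$ with $s_0,\dots,s_{k-1}\in A$ and $s_k\in B$, I would lift it through $\le$ by induction on $k$: since $R_{\mathrm{stsim}}$ is a stuttering simulation, each transition $s_i\sra s_{i+1}$ is matched, via clause~(2), by a possibly empty block of transitions out of the current image state, producing a path $t=t_0\sra\cdots\sra t_m$ together with a monotone index map $\sigma$ with $s_{\sigma(j)}\le t_j$ for all $j$, $\sigma(m)=k$, and $\sigma(j)<k$ for $j<m$; then $t_j\in A$ for $j<m$ (from $s_{\sigma(j)}\in A$, $s_{\sigma(j)}\le t_j$, upward closure) and $t_m\in B$ (from $s_k\in B$, $s_k\le t_m$, upward closure), so the lifted path witnesses $t\in\pos(A,B)$, while the degenerate case $m=0$ is handled directly from $s_0\le t$ (giving $t\in A$) and $s_k\le t$ (giving $t\in B$). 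The main obstacle is establishing ``$\sigma(j)<k$ for $j<m$'', i.e.\ that a lifted state carries the terminal index $k$ only at the very last produced state $t_m$: this has to be tracked through the induction with particular care for the ``empty block'' case, in which $t$ must re-simulate several consecutive $s_i$'s without moving, so that a single $t_j$ absorbs a run of indices and must be re-indexed correctly.
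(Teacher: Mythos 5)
Your proposal is correct, and the soundness half coincides with the paper's (once no refiner exists, the loop guard's negation plus $\StSim(x)\subseteq P_\ell(x)$ says the output relation is a stuttering simulation, hence contained in the largest one). The completeness half, however, is organized around a genuinely different invariant. The paper maintains exactly the invariant you call naive, $R_{\mathrm{stsim}}(w)\subseteq\StSim_i(w)$, and makes it survive a refinement step by pairing it with an auxiliary lemma (Lemma~\ref{Sitrans}) stating that each $\StSim_i$ is a preorder: given $v\in R_{\mathrm{stsim}}(w)$ and a path witnessing $w\in S$, it lifts each transition $u_j\sra u_{j+1}$ through the stuttering simulation and places the lifted states inside $\StSim_i(x)$ via $R_{\mathrm{stsim}}(u_j)\subseteq\StSim_i(u_j)\subseteq\StSim_i(x)$, the last inclusion by transitivity. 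Your invariant, upward closure of each $\StSim(w)$ under $\le$, is precisely the consequence of the paper's two invariants that does the work (from $z\in\StSim_i(w)$ and $z\le z'$ one gets $z'\in R_{\mathrm{stsim}}(z)\subseteq\StSim_i(z)\subseteq\StSim_i(w)$), so the two proofs share the same stuttering path-lifting computation at their core; the difference is that you isolate it as a standalone lemma about $\pos$ and $\le$ alone (``$\pos$ preserves upward closure''), independent of the algorithm, whereas the paper performs it inline inside the induction on iterations. Your route buys modularity and also an explicit reflexivity invariant $x\in\StSim(x)$ that yields $x\in S$ and hence termination, a point the paper leaves implicit; the paper's route keeps the induction hypothesis in the form that is directly the statement to be proved. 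Your concern about the index map $\sigma$ is real but resolves itself through the block structure of the lifting: the only lifted state that is related merely to $s_k$ (and hence only known to lie in $B$) is the final state of the final block, so no intermediate state of the concatenated path escapes $A$, and the degenerate all-blocks-empty case falls back on $t\in A\cap B$.
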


\section{Partition-Relation Pairs}
A \emph{partition-relation pair} $\tuple{P,\tle}$,  PR for short, is given by a 
partition $P\in \Part(\Sigma)$ together with
a binary relation $\tle\; \subseteq P\times P$
between blocks of $P$. 
We write $B\tl C$ when $B\tle C$ and $B\neq C$ and $(B',C')\tle (B,C)$ 
when $B'\tle B$ and $C'\tle C$. 
Our stuttering simulation algorithm relies on the idea of symbolizing
the $\BasicSSA$ procedure in order to maintain a PR $\tuple{P,\tle}$
in place of the family of explicit sets of states $\{\StSim(s)\}_{s\in
\Sigma}$. As a first step,  $\cS= \{\StSim(s)\}_{s\in
\Sigma}$ induces a partition $P$ that corresponds to 
the following equivalence $\sim_\cS$: 
$$s_1
\sim_\cS s_2 \text{~~iff~~} \forall s\in \Sigma. \; s_1 \in \StSim(s) \Lra s_2
\in \StSim(s).$$ 
Hence, the intuition is that 
if $P(s_1)=P(s_2)$ then $s_1$ and $s_2$ are
``currently'' candidates to be stuttering simulation equivalent. 
Accordingly, 
a relation $\tle$ on $P$ encodes stuttering simulation as follows:
if $s\in \Sigma$ then
$\StSim(s) = \{ t\in \Sigma~|~ P(s) \tle P(t)\}$.  Here, the intuition
is that if $B\tle C$ then any state $t\in C$ is ``currently''
candidate to stuttering simulate any state $s\in B$. Equivalently, 
the following invariant property is maintained: if
$s \leq t$ then $P(s) \tle P(t)$.   
Thus, a PR
$\tuple{P,\tle}$ will represent the current approximation of the
stuttering simulation preorder and in particular $P$ will represent the
current approximation of stuttering simulation equivalence.

More precisely, a PR $\cP= \tuple{P,\tle}$ induces the following map
$\mu_\cP : \wp(\Sigma) \ra \wp(\Sigma)$: for any $X\in
\wp(\Sigma)$,
$$\mu_\cP (X) \ud
\cup\{ C\in P ~|~ \exists B\in P.\, B\cap X \neq
\varnothing,\: B\tle C \}.$$ 
Note that, for any $s\in \Sigma$, 
$\mu_\cP (s) =
\mu_\cP (P(s)) = \{ t\in \Sigma~|~ P(s) \tle P(t)\}$, that is, 
$\mu_\cP (s)$ represents the set of states that are currently
candidates to stuttering simulate $s$. 
A  PR  $\cP=\tuple{P,\tle}$ is therefore defined to be a stuttering
simulation for a KS $\cK$ when the relation 
$\{(s,t)\in \Sigma\times \Sigma~|~ s\in \Sigma,\: t\in \mu_\cP(s)\}$
is a stuttering simulation on $\cK$.




Recall that in $\BasicSSA$ 
a pair of states $(s,t)\in \Sigma\times \Sigma$ is a refiner for
$\StSim$ when
$s\sra t$ and
$\StSim(s) \not\subseteq \pos(\StSim(s),\StSim(t))$. 
Accordingly, a
pair of blocks $(B,C)\in P\times P$ is called a refiner 
for  $\cP$ when $B\raee C$ and $\mu_\cP (B) \not\subseteq 
\pos(\mu_\cP (B),\mu_\cP (C))$. Thus, by defining 
$$\Refiner(\cP) \triangleq \{(B,C)\in P^2~|~B\raee C,\; 
\mu_\cP (B) \not\subseteq 
\pos(\mu_\cP (B),\mu_\cP (C))\}$$
the following characterization holds:

\begin{theorem}\label{st-char2}
$\cP=(P,\trianglelefteq)$ is a stuttering simulation 
iff $\refiner(\cP)=\varnothing$ and for any $s\in \Sigma$,
$\mu_\cP(s)\subseteq P_\ell(s)$.  
\end{theorem}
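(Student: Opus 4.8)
The plan is to reduce the statement to the $\pos$-based characterization of stuttering simulations recalled above, which says that a relation $R\subseteq\Sigma\times\Sigma$ is a stuttering simulation iff $R(x)\subseteq P_\ell(x)$ for every $x$ and $R(x)\subseteq\pos(R(x),R(y))$ whenever $x\sra y$. By definition $\cP$ is a stuttering simulation exactly when $R_\cP\triangleq\{(s,t)\mid t\in\mu_\cP(s)\}$ is one, and since $R_\cP(x)=\mu_\cP(x)$, this amounts to the conjunction of: (a) $\mu_\cP(x)\subseteq P_\ell(x)$ for all $x\in\Sigma$; (b) $\mu_\cP(x)\subseteq\pos(\mu_\cP(x),\mu_\cP(y))$ for all $x,y\in\Sigma$ with $x\sra y$. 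Condition (a) is verbatim the second clause in the theorem, so the only thing left to prove is that (b) is equivalent to $\refiner(\cP)=\varnothing$.

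The two facts I would use for this equivalence are that $\mu_\cP$ is constant on the blocks of $P$, i.e.\ $\mu_\cP(s)=\mu_\cP(P(s))$ for every $s\in\Sigma$ (already observed right after the definition of $\mu_\cP$), and that for blocks $B,C\in P$ we have $B\raee C$ precisely when there exist $x\in B$ and $y\in C$ with $x\sra y$. For the implication ``(b) $\Rightarrow$ $\refiner(\cP)=\varnothing$'', take any $(B,C)\in P^2$ with $B\raee C$ and pick witnesses $x\in B$, $y\in C$, $x\sra y$; then (b) gives $\mu_\cP(B)=\mu_\cP(x)\subseteq\pos(\mu_\cP(x),\mu_\cP(y))=\pos(\mu_\cP(B),\mu_\cP(C))$, so $(B,C)\notin\refiner(\cP)$. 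For the converse, given $x\sra y$ put $B\triangleq P(x)$ and $C\triangleq P(y)$; then $B\raee C$, so $\refiner(\cP)=\varnothing$ yields $\mu_\cP(B)\subseteq\pos(\mu_\cP(B),\mu_\cP(C))$, and rewriting $\mu_\cP(B)=\mu_\cP(x)$, $\mu_\cP(C)=\mu_\cP(y)$ gives exactly (b).

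There is no genuinely hard step here: the proof is a direct unfolding of the definitions of $\mu_\cP$, $\raee$ and $\refiner$, combined with the block-invariance of $\mu_\cP$ and the previously established $\pos$-characterization of stuttering simulations. The only point I would state explicitly, since everything else rests on it, is the block-invariance of $\mu_\cP$, which is immediate from its definition (for a singleton, $B\cap\{s\}\neq\varnothing$ iff $B=P(s)$), together with the observation that every concrete transition $x\sra y$ induces the block step $P(x)\raee P(y)$ and, conversely, every block step $B\raee C$ arises from some concrete transition.
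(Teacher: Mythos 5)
Your proposal is correct and follows essentially the same route as the paper's proof: both directions reduce the theorem to the $\pos$-based characterization of stuttering simulations from the Background section, using the block-invariance $\mu_\cP(s)=\mu_\cP(P(s))$ and the correspondence between concrete transitions $x\sra y$ and block-level existential transitions $P(x)\raee P(y)$. The only difference is presentational: you isolate the equivalence of clause (b) with $\refiner(\cP)=\varnothing$ as a separate step, whereas the paper interleaves it with the labeling condition in each direction.
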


\subsection{A Symbolic Algorithm}\label{asa} 
The algorithm $\BasicSSA$ is therefore made symbolic as follows:

\begin{itemize}
\item[{\rm (1)}] $\tuple{P_\ell,\id}$ is the input PR, where $(B,C)\in
  \id \Lra B=C$; 

\medskip
\item[{\rm (2)}] Find $(B,C)\in \refiner (\cP)$; if
  $\refiner(\cP)=\varnothing$ exit;

\medskip
\item[{\rm (3)}] Compute $S=\pos(\mu_\cP(B),\mu_\cP(C))$;

\medskip
\item[{\rm (4)}] $\cP' := \tuple{P',\tle'}$, where $P' = \Split(P,S)$
  and $\tle'$ is modified in such a way
  that for any $s\in \Sigma$, $\mu_{\cP'}(P'(s))=\mu_\cP(P(s))$;
\medskip
\item[{\rm (5)}] $\cP'' := \tuple{P',\tle''}$, where 
$\tle'$ is modified to $\tle''$ in such a way that for any $B\in P'$:
$$\mu_{\cP''}(B) = \left\{\begin{array}{ll}
                      \mu_{\cP'}(B)\cap S    & \mbox{~~if $B\subseteq S$}\\
                      \mu_{\cP'}(B) & \mbox{~~if $B\cap S=\varnothing$}
                      \end{array}\right.
$$

\item[{\rm (6)}] $\cP := \cP''$ and go to (2).
\end{itemize}

\incmargin{0.5em}
\linesnumbered
\begin{algorithm}[Ht]
\small
\printsemicolon
\SetVline
\Setnlskip{1.5em}
\SetAlTitleFnt{textsc}
\Indm

\FuncSty{$\SSA (\PR\;\tuple{P,\Rel})\;\{$}

\Indp
$\mathit{Initialize()}$\;
\While{ $((B,C) := \mathit{FindRefiner}())$ $\ne$ $(\cnull,\cnull)$}{
  \clist{\cstate} $X := \mathit{Image}(\tuple{P,\Rel},B)$,~~ $Y := \mathit{Image}(\tuple{P,\Rel},C)$\; 
  \clist{\cstate} $S := \pos(X,Y)$\;
  $\mathit{SplittingProcedure}(\tuple{P,\Rel},S)$\;
  $\mathit{Refine}(\tuple{P,\Rel},S)$\;
}

\Indm
\FuncSty{\}}

\caption{Stuttering Simulation Algorithm $\SSA$.}\label{fig-ssa}
\end{algorithm}

This leads to the symbolic algorithm $\SSA$ described in
Figure~\ref{fig-ssa}, where: the input PR $\tuple{P,\Rel}$ at line~1
is $\tuple{P_\ell,\id}$ of point~(1); point~(2) corresponds to the
call $\mathit{FindRefiner}()$ at line~3; point~(3) corresponds to
lines~4-5; point~(4) corresponds to the call
$\mathit{SplittingProcedure}(\tuple{P,\Rel},S)$ at line~6; point~(5)
corresponds to the call
$\mathit{Refine}(\tuple{P,\Rel},S)$ at line~7.
The following graphical example shows how points~(4) and~(5) refine a
PR $\tuple{\{[0,1],[2,3],[4,5],[6,7],[8,9]\},\tle}$ 
w.r.t.\ the set 
$S=\{3,4,5,8\}$, where if $B \tl C$ then $B$ is drawed below $C$
while if $B\tl C$ and $C\tl B$ then $B$ and $C$ are at same height and connected by a double line.
\begin{center}
    \begin{tikzpicture}[scale=0.7]
      \tikzstyle{arrow}=[->,>=latex']
      \path 
      (0,0) node[shape=rectangle,draw,name=01]{0~~1}
      (0,1.5) node[shape=rectangle,draw,name=23]{2~~3}
      (1.5,0) node[shape=rectangle,draw,name=45]{4~~5}
      (1.5,1.5) node[shape=rectangle,draw,name=67]{6~~7}
      (3,0) node[shape=rectangle,draw,name=89]{8~~9}
      (0.45,2.4) node[]{\color{blue}$S$};

      \draw[solid] (01) -- (23);
      \draw[solid] (01) -- (67);
      \draw[solid] (45) -- (67);
      \draw[solid] (89) -- (67);

\path
      (4.2,1.5) node[]{$\stackrel{(4)}{\mathbf{\Rightarrow}}$};

      \path 
      (6.1,0) node[shape=rectangle,draw,name=01m]{0~~1}
      (5.55,1.5) node[shape=rectangle,draw,name=2m]{2}
      (6.65,1.5) node[shape=rectangle,draw,name=3m]{3}
      (10.65,0) node[shape=rectangle,draw,name=9m]{9}
      (9.55,0) node[shape=rectangle,draw,name=8m]{8}
      (8.1,1.5) node[shape=rectangle,draw,name=67m]{6~~7}
      (8.1,0) node[shape=rectangle,draw,name=45m]{4~~5};

	   \draw[solid] (01m) -- (2m);
	   \draw[solid] (01m) -- (3m);
	   \draw[double distance=3pt,solid] (2m) to (3m);
      \draw[solid] (01m) -- (67m);
      \draw[solid] (45m) -- (67m);
      \draw[solid] (8m) -- (67m);
      \draw[solid] (9m) -- (67m);
       \draw[double distance=3pt,solid] (8m) -- (9m);

      \path
      (11.5,1.5) node[]{$\stackrel{(5)}{\mathbf{\Rightarrow}}$};

      \path 
      (13,0) node[shape=rectangle,draw,name=01n]{0~~1}
      (13,1.5) node[shape=rectangle,draw,name=2n]{2}
      (13,3) node[shape=rectangle,draw,name=3n]{3}
      (15.75,0) node[shape=rectangle,draw,name=9n]{9}
      (15.75,1.5) node[shape=rectangle,draw,name=8n]{8}
      (14.5,3) node[shape=rectangle,draw,name=67n]{6~~7}
      (14.5,0) node[shape=rectangle,draw,name=45n]{4~~5};

      \draw[solid] (01n) -- (2n);
      \draw[solid] (01n) -- (67n);
      \draw[solid] (2n) -- (3n);
      \draw[solid] (9n) -- (8n);
      \draw[solid] (8n) -- (67n);
      \draw[solid] (45n) -- (67n);

\path[draw=blue,semithick,densely dashed] (0.1,0.5) -- (0.1,2) -- (0.7,2) -- (0.7,0.73) -- (3,0.73) --
(3,-0.6) -- (0.8,-0.6) -- (0.8,0.5) -- (0.1,0.5);

  \end{tikzpicture}    
\end{center}
The correctness of this symbolic algorithm goes as follows.  

\begin{theorem}[\textbf{Correctness}]\label{correctness}
$\SSA$ is a correct implementation of $\BasicSSA$, i.e., if 
$\StSim$ is the output
function of $\BasicSSA$ on input $P_\ell$ and
$\cP = \tuple{P,\Rel}$ is the output PR of $\SSA$ on input
$\tuple{P_\ell,\id}$ then for any $x\in \Sigma$, $\StSim(x)=
\mu_\cP(x)$. 
\end{theorem}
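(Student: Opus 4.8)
The plan is to prove Theorem~\ref{correctness} by establishing that $\SSA$ simulates $\BasicSSA$ step-for-step under the translation between explicit simulator families $\cS = \{\StSim(s)\}_{s\in\Sigma}$ and PR s $\tuple{P,\Rel}$ given by $\mu_\cP$. Concretely, I would first make precise the correspondence: to a PR $\cP=\tuple{P,\Rel}$ associate the family $\{\mu_\cP(s)\}_{s\in\Sigma}$, and conversely to a family $\cS$ associate the PR whose partition is induced by the equivalence $\sim_\cS$ and whose relation $\tle$ is $B\tle C$ iff $C\subseteq\StSim(s)$ for (any, equivalently some) $s\in B$. The key structural fact I would isolate as a lemma is that these two maps are mutually inverse \emph{on the families that actually arise}, i.e.\ on families $\cS$ with the property that each $\StSim(s)$ is a union of blocks of the partition induced by $\sim_\cS$ and that $s_1\sim_\cS s_2 \Rightarrow \StSim(s_1)=\StSim(s_2)$ — a property preserved by the $\BasicSSA$ refinement step. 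Establishing this ``Galois-type'' bijection, and checking it is maintained as an invariant, is where most of the work sits.

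Next I would set up the main invariant for the loop: after each iteration of $\SSA$, writing $\cP$ for the current PR and $\cS$ for the current state of $\BasicSSA$ run on the ``same'' refiner choices, we have $\mu_\cP(x)=\StSim(x)$ for all $x\in\Sigma$. The base case follows since $\tuple{P_\ell,\id}$ gives $\mu_{\tuple{P_\ell,\id}}(x)=P_\ell(x)$, which is exactly the initialization of $\StSim$. For the inductive step I would argue in three parts, matching the decomposition in points~(3),(4),(5) of Section~\ref{asa}. First, $S=\pos(\mu_\cP(B),\mu_\cP(C))$ computed from $X=\Image(\tuple{P,\Rel},B)$ and $Y=\Image(\tuple{P,\Rel},C)$ equals $\pos(\StSim(x),\StSim(y))$ for the refiner $(x,y)$ with $x\in B,y\in C$ chosen by $\BasicSSA$ — here I use the induction hypothesis $\mu_\cP(B)=\StSim(x)$, $\mu_\cP(C)=\StSim(y)$, together with the fact (from Theorem~\ref{st-char2} and the discussion preceding it) that $(B,C)\in\Refiner(\cP)$ iff $(x,y)$ is a refiner for $\StSim$, so the two algorithms do reach a refiner exactly at the same moments and can be coupled. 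Second, the splitting step~(4) changes $P$ to $\Split(P,S)$ but, by its defining specification, leaves every $\mu_{\cP'}(P'(s))=\mu_\cP(P(s))$, so it does not change the associated family at all; it merely refines the partition so that $S$ becomes block-expressible, which is precisely what is needed for step~(5) to be implementable. Third, the refinement step~(5) changes $\tle'$ to $\tle''$ so that $\mu_{\cP''}(B)=\mu_{\cP'}(B)\cap S$ when $B\subseteq S$ and $\mu_{\cP''}(B)=\mu_{\cP'}(B)$ when $B\cap S=\varnothing$; translating through $\mu$ this is exactly the update $\StSim(w):=\StSim(w)\cap S$ for $w\in S$ and $\StSim(w):=\StSim(w)$ for $w\notin S$, i.e.\ the $\BasicSSA$ refinement by $S$. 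Hence the invariant is restored.

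Finally I would close the loop: since at each stage $\Refiner(\cP)=\varnothing$ iff no refiner exists for the coupled $\StSim$ (again by Theorem~\ref{st-char2}), $\SSA$ terminates precisely when $\BasicSSA$ does, and at that point $\mu_\cP(x)=\StSim(x)$ for all $x$, which is the claimed conclusion. Termination of $\SSA$ itself follows from termination of $\BasicSSA$ (each refinement strictly shrinks $\bigcup_x\StSim(x)$, hence only finitely many can occur) transported across the bijection; alternatively one observes directly that each iteration either strictly refines $P$ or strictly shrinks $\tle$. I expect the main obstacle to be the bookkeeping around the splitting step~(4): one must verify that steps~(4) and~(5) together, as specified abstractly, can always be realized by \emph{some} modification of $\tle$ (existence), and that the net effect on $\{\mu_\cP(s)\}_s$ is independent of which realization is chosen — so that ``the'' output PR is well-defined up to the equivalence that matters, namely having the same associated $\mu$. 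Dealing cleanly with the case $\Split(P,S)=P$ (no splitting, when $S$ is already a union of blocks) and with blocks $B$ that are neither contained in $S$ nor disjoint from $S$ after splitting (there are none, by construction of $\Split$) is the kind of routine-but-necessary verification I would spell out carefully rather than wave through.
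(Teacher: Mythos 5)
Your proposal is correct and follows essentially the same route as the paper's proof, which is a coupling argument resting on exactly your two inductive facts: (i) $(x,y)$ is a refiner for $\StSim$ iff $(P(x),P(y))$ is a refiner for $\cP$, since $\StSim(x)=\mu_\cP(P(x))$, and (ii) the split-then-prune update on the PR induces via $\mu$ precisely the $\BasicSSA$ update $\StSim(w):=\StSim(w)\cap S$ for $w\in S$. The paper states these two facts far more tersely than you do, but the invariant, the coupling of refiner choices, and the three-part inductive step are the same.
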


The next step consists in devising an efficient implementation of $\SSA$.

\section{Bottom States}
\label{sec:bs}

While it is not too hard to devise an efficient implementation of
lines~2 and 4-7 of the $\SSA$ algorithm, 
it is instead not straightforward to find
a refiner in an efficient way. In Groote and Vaandrager's~\cite{gv90}
algorithm for computing stuttering bisimulations the key point for
efficiently finding a refiner in their setting is the notion of \emph{bottom state}.
Given a set of states $S\subseteq \Sigma$,
a bottom state of $S$ is a state
$s\in S$ that cannot go inside $S$, i.e., $s$ can only go outside $S$
(note that $s$ may also have no outgoing transition).  
For any 
$S\subseteq\Sigma$, we therefore define: $$\bottom(S)\triangleq
S\smallsetminus\pre(S).$$

Bottom states allow to efficiently find refiners in KSs
that do not contain cycles of states all having the same labeling.
Following Groote and Vaandrager~\cite{gv90}, a transition $s\sra t$ is
called \emph{inert} for a partition $P\in \Part(\Sigma)$ when
$P(s)=P(t)$. Clearly, if a set of states $S$ in a KS
$\cK$ is strongly
connected via inert transitions for the labeling partition $P_\ell$
then all the states in $S$ are stuttering simulation equivalent, i.e.,
if $s,s'\in S$ then $\Pstsim(s)=\Pstsim(s')$. Thus, each strongly
connected component (s.c.c.) $S$ with respect to inert transitions for
$P_\ell$, called \emph{inert s.c.c.}, can be collapsed to one
single ``symbolic state''. In particular, if $\{s\}$ is one such
inert s.c.c., i.e.\ $s\sra s$, then this collapse simply removes the
transition $s\sra s$. It is important to remark that 
a standard depth-first search algorithm by Tarjan \cite{cormen}, 
running in $O(|\Sigma| + |\sra|)$ time,
allows us to find and then collapse all the inert s.c.c.'s in the
input KS. We can thus assume w.l.o.g.\ that the KS
$\cK$ does not contain inert s.c.c.'s.  The following
characterization of refiners therefore holds.

\begin{lemma} \label{l2}
Assume that $\cK$ does not contain inert s.c.c.'s.
Let $\cP= \tuple{P,\tle}$ be a PR such that for any $B\in P$,
$\mu_\cP(B) \subseteq P_\ell(B)$.  Consider 
$(B,C)\in P\times P$ such that $B\raee C$.
Then, $(B,C)\in \refiner(\cP)$ 
iff $\bottom(\mu_\cP (B))\not\subseteq \mu_\cP(C) \cup \pre(\mu_\cP (C))$.
\end{lemma}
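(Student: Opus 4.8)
The plan is to prove the two directions of the equivalence by unravelling the definition of $\pos$ and using the absence of inert s.c.c.'s to control how a path inside $\mu_\cP(B)$ can progress. Throughout, write $M_B = \mu_\cP(B)$ and $M_C = \mu_\cP(C)$, and recall that by definition $(B,C)\in\refiner(\cP)$ iff $B\raee C$ (which holds by assumption) and $M_B \not\subseteq \pos(M_B,M_C)$. So it suffices to show that $M_B \subseteq \pos(M_B,M_C)$ iff $\bottom(M_B) \subseteq M_C \cup \pre(M_C)$.

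For the "only if'' direction (assume $M_B\subseteq\pos(M_B,M_C)$, derive the inclusion on bottom states): take any $s\in\bottom(M_B)$. By hypothesis $s\in\pos(M_B,M_C)$, so there is a path $s=s_0\sra s_1\sra\cdots\sra s_k$ with $s_0,\dots,s_{k-1}\in M_B$ and $s_k\in M_C$. Now distinguish $k=0$ (then $s=s_k\in M_C$) from $k\geq 1$. If $k\geq 1$ then $s\sra s_1$ with $s_1\in M_B$; but $s\in\bottom(M_B) = M_B\ssm\pre(M_B)$ means $s$ has no successor in $M_B$, a contradiction. Hence $k=0$ and $s\in M_C\subseteq M_C\cup\pre(M_C)$. (In fact the bottom-state hypothesis rules out $k\geq 1$ outright, which is exactly why bottom states are the useful witnesses here — this is the crux of the "efficient'' characterization.)

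For the "if'' direction (assume $\bottom(M_B)\subseteq M_C\cup\pre(M_C)$, show $M_B\subseteq\pos(M_B,M_C)$): take any $s\in M_B$; I must produce a path inside $M_B$ from $s$ to $M_C$. The key structural fact is that since $\cK$ has no inert s.c.c.'s and $M_B$ is a union of blocks of $P$ all contained in a single labeling class $P_\ell(B)$ (using $\mu_\cP(B)\subseteq P_\ell(B)$), the restriction of $\sra$ to $M_B$ is acyclic; hence from $s$ one can follow successors inside $M_B$ and must terminate, reaching after finitely many steps a state $s'\in M_B$ with no successor in $M_B$, i.e.\ $s'\in\bottom(M_B)$. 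I would make this precise by induction on the (well-founded) length of the longest $\sra$-path out of $s$ staying in $M_B$: either $s\in\bottom(M_B)$ already, or $s\sra s''$ with $s''\in M_B$ and $s''$ has a strictly shorter such path, so by induction $s''\in\pos(M_B,M_C)$, whence prepending $s\sra s''$ gives $s\in\pos(M_B,M_C)$. The base case $s\in\bottom(M_B)$ is handled by the hypothesis: either $s\in M_C$ (done, $k=0$) or $s\in\pre(M_C)$, so $s\sra t$ for some $t\in M_C$, and the one-step path $s\sra t$ witnesses $s\in\pos(M_B,M_C)$ (note $s\in M_B$ and $t\in M_C$, matching the definition with $k=1$).

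The main obstacle is the acyclicity argument in the "if'' direction: I need to be sure that "no inert s.c.c.'s'' plus "$M_B$ lies inside one labeling class'' genuinely forbids any cycle through $M_B$. The point is that every transition between two states of $M_B$ is inert for $P_\ell$ (same label), so a cycle inside $M_B$ would be an inert s.c.c.\ of size $\geq 1$ — including the degenerate self-loop $s\sra s$ — contradicting the standing assumption. Once that is nailed down, the well-founded induction goes through routinely and the reachable bottom state gives the required path. $\Box$
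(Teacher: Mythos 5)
Your overall strategy coincides with the paper's: unfold the definition of $\pos$, use the fact that a bottom state has no successor inside $\mu_\cP(B)$ to control the witnessing path in one direction, and use the absence of inert s.c.c.'s (every transition inside $\mu_\cP(B)\subseteq P_\ell(B)$ is inert, so a cycle there would form an inert s.c.c.) to reach a bottom state by finite descent in the other. Your ``if'' direction is sound and is essentially the paper's argument, merely reorganized as a well-founded induction on path length instead of an iterated construction terminated by finiteness of $\Sigma$.

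Your ``only if'' direction, however, contains a step that fails. From $s\in\pos(\mu_\cP(B),\mu_\cP(C))$ you obtain a path $s=s_0\sra\cdots\sra s_k$ with $s_0,\dots,s_{k-1}\in\mu_\cP(B)$ and $s_k\in\mu_\cP(C)$, and you then claim that $k\geq 1$ forces $s_1\in\mu_\cP(B)$, contradicting $s\in\Bottom(\mu_\cP(B))$. That inference is valid only for $k\geq 2$ (where $1\leq k-1$). For $k=1$ the definition of $\pos$ places $s_1=s_k$ in $\mu_\cP(C)$, not in $\mu_\cP(B)$, so there is no contradiction; what you get instead is $s\in\pre(\mu_\cP(C))$. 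As written, your argument would establish the stronger inclusion $\Bottom(\mu_\cP(B))\subseteq\mu_\cP(C)$, which is false in general --- a bottom state may step directly out of $\mu_\cP(B)$ into $\mu_\cP(C)$ without itself lying in $\mu_\cP(C)$ --- and this is precisely why the lemma's right-hand side carries the extra term $\pre(\mu_\cP(C))$. The paper's proof performs the correct three-way split: $k\geq 2$ is impossible, $k=0$ yields $s\in\mu_\cP(C)$, and $k=1$ yields $s\in\pre(\mu_\cP(C))$. The repair is one line, but the case analysis as stated is wrong, and the parenthetical claim that ``the bottom-state hypothesis rules out $k\geq 1$ outright'' should be removed.
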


If $B\in P$ is any block then we define as 
\emph{local bottom states} of $B$ 
all the bottom states of $\mu_\cP(B)$ that belong to $B$, namely
$$\localBottom(B) \triangleq \bottom(\mu_\cP(B))\cap B.$$
Also, we define $C\in P$ as a \emph{bottom
  block} for $B$ when $C$ contains at least a bottom state of $\mu_\cP(B)$ and
$B \tl  C$, that is:
$$
\bottomBlock(B) \triangleq
\{C \in P~|~B\tl C,\: C\cap \bottom(\mu_\cP (B)) \neq
  \varnothing\}.
$$
Local bottoms and  
bottom blocks characterize refiners for stuttering simulation as follows:

\begin{theorem}\label{th-refiner}
Assume that $\cK$ does not contain inert s.c.c.'s. 
Let $\cP = \tuple{P,\tle}$ be  a PR such that $\tle$ is a preorder
and for any $B\in P$,
$\mu_\cP(B) \subseteq P_\ell(B)$.
Consider $(B,C)\in P\times P$ such that $B\raee C$ and
for any $(D,E)$ such that $D\raee E$ and 
$(B,C) \tl (D,E)$, $(D,E) \not\in \refiner(\cP)$. 
Then, $(B,C)\in \refiner(\cP)$ iff 
at least one of the following two conditions holds:
  \begin{enumerate}
  \item[{\rm (i)}] $C\not \trianglelefteq B$ and  
$\localBottom(B)\not \subseteq\pre(\mu_\cP (C))$;
  \item[{\rm (ii)}] There exists $D\in \bottomBlock(B)$ such that $C \not 
    \trianglelefteq D$ and  
   $D \not\!\raee \mu_\cP(C)$. 
  \end{enumerate}
\end{theorem}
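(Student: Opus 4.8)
The proof refines Lemma~\ref{l2}: under the stated
hypotheses, $(B,C)\in\refiner(\cP)$ iff
$\bottom(\mu_\cP(B))\not\subseteq\mu_\cP(C)\cup\pre(\mu_\cP(C))$, so I
must show that, given that no strictly larger pair $(D,E)$ is a refiner,
this failure is witnessed precisely by (i) a \emph{local} bottom of $B$
outside $\pre(\mu_\cP(C))$, or (ii) a bottom block $D$ of $B$ with
$C\not\trianglelefteq D$ and $D\not\!\raee\mu_\cP(C)$. The key
structural fact is that $\bottom(\mu_\cP(B))$ is a union of
intersections of blocks with $\mu_\cP(B)$: since $\mu_\cP(B)$ is a union
of blocks of $P$, a bottom state $s$ of $\mu_\cP(B)$ lies in some block
$D\subseteq\mu_\cP(B)$, and because $\cK$ has no inert s.c.c.\ and
$\mu_\cP(D)\subseteq P_\ell(D)$, $D$ itself cannot be strongly connected
by inert transitions, so... more carefully: I will argue that if $D\in P$
meets $\bottom(\mu_\cP(B))$ then either $D=B$ (the local-bottom case) or
$B\tl D$, i.e.\ $D\in\bottomBlock(B)$ — this uses that $\tle$ is a
preorder and $\mu_\cP(B)\supseteq D$ forces $B\tle D$ (via the invariant
relating $\tle$ and $\mu_\cP$), and $D\neq B$.

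\textbf{Steps.} First I would establish the decomposition
$\bottom(\mu_\cP(B)) = \localBottom(B)\;\cup\;\bigcup_{D\in\bottomBlock(B)}
\bigl(D\cap\bottom(\mu_\cP(B))\bigr)$, and observe that for
$D\in\bottomBlock(B)$ one has $D\subseteq\bottom(\mu_\cP(B))$ or at least
that the relevant membership questions are uniform over $D$: specifically
$D\cap\pre(\mu_\cP(C))\neq\varnothing$ iff $D\raee\mu_\cP(C)$, and
$D\subseteq\mu_\cP(C)$ iff $C\tle D$ (here using that $\mu_\cP(C)$ is a
union of blocks and $D$ is a block, together with the preorder/invariant
link). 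Second, for the local part, $s\in B$ so $s\in\mu_\cP(C)$ would
require $B\tle C$... no — it requires $C$'s states to contain $s$, i.e.\
$B$ is a block of $\mu_\cP(C)$, i.e.\ $C\tle B$; combined with
$C\not\trianglelefteq B$ in clause~(i) this rules out $s\in\mu_\cP(C)$,
so for local bottoms the condition $s\notin\mu_\cP(C)\cup\pre(\mu_\cP(C))$
collapses to $s\notin\pre(\mu_\cP(C))$, matching
$\localBottom(B)\not\subseteq\pre(\mu_\cP(C))$. Third, for the
bottom-block part, $D\cap\bottom(\mu_\cP(B))\not\subseteq
\mu_\cP(C)\cup\pre(\mu_\cP(C))$ becomes: $D\not\subseteq\mu_\cP(C)$ (i.e.\
$C\not\trianglelefteq D$) \emph{and} $D\not\!\raee\mu_\cP(C)$; here I must
check that the bottom states of $\mu_\cP(B)$ inside $D$ cannot be
separately "covered", which is where the maximality hypothesis enters —
if some bottom state of $D$ were in $\pre(\mu_\cP(C))$ but another were
not, I would find a strictly larger refiner $(D,E)$ contradicting the
assumption, so the covering is all-or-nothing on $D$. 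Assembling:
$(B,C)\in\refiner(\cP)$ iff Lemma~\ref{l2}'s condition fails iff the
local part or some bottom-block part is uncovered iff (i) or (ii).

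\textbf{Main obstacle.} The delicate point is the third step: justifying
that, modulo the maximality hypothesis, coverage of
$D\cap\bottom(\mu_\cP(B))$ by $\pre(\mu_\cP(C))$ is "uniform" — either all
such states reach $\mu_\cP(C)$ or none do. The intuition is that if
$D\raee\mu_\cP(C)$, i.e.\ some state of $D$ has a transition into
$\mu_\cP(C)$, then running Lemma~\ref{l2} on the pair $(D,C)$ (which
satisfies $(B,C)\tl(D,C)$ when $B\tl D$) and using that $(D,C)$ is not a
refiner forces $\bottom(\mu_\cP(D))\subseteq\mu_\cP(C)\cup\pre(\mu_\cP(C))$;
one then relates $\bottom(\mu_\cP(D))$ to the bottom states of $\mu_\cP(B)$
lying in $D$ — since $D\subseteq\mu_\cP(B)$, a state of $D$ that is bottom
in $\mu_\cP(B)$ is a fortiori bottom in $\mu_\cP(D)$ — to conclude those
states are covered. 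The converse direction, and making sure the $k=0$
degenerate case of $\pos$ is handled (a bottom state trivially satisfies
the $\pos$ condition when it already lies in $\mu_\cP(C)$, which is why the
$\mu_\cP(C)$ disjunct must be carried along), will need care but is
routine. I would organize the argument as: (a) the decomposition lemma,
(b) the two membership reductions, (c) the maximality-driven uniformity,
(d) the final equivalence chain through Lemma~\ref{l2}.
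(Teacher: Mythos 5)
Your proposal follows essentially the same route as the paper's proof: the same decomposition of $\bottom(\mu_\cP(B))$ into $\localBottom(B)$ and the pieces $D\cap\bottom(\mu_\cP(B))$ for $D\in\bottomBlock(B)$, the same membership reductions (using $C\not\tle B$ to discard the $\mu_\cP(C)$ disjunct on $B$, and blockwise uniformity of $D\subseteq\mu_\cP(C)$ and $D\cap\pre(\mu_\cP(C))\neq\varnothing$), and the same use of the maximality hypothesis together with Lemma~\ref{l2} and the inclusion $D\cap\bottom(\mu_\cP(B))\subseteq\bottom(\mu_\cP(D))$. The only nit is that the hypothesis must be applied to a pair $(D,E)$ with $D\raee E$ and $C\tle E$ witnessing $D\raee\mu_\cP(C)$, not to $(D,C)$ itself, but your stated conclusion is the correct one and this is exactly how the paper proceeds.
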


We will show that 
this characterization provides the basis for 
an algorithm that efficiently finds refiners. Hence, this procedure
also checks whether a given
preorder $R$ is a stuttering simulation. This can be done 
in $O(|P||\sra|)$ time and
$O(|\Sigma||P|\log |\Sigma |)$ space, 
where $P$ is the partition corresponding to the
equivalence $R\cap R^{-1}$. Thus, this 
algorithm for checking stuttering
simulation already significantly improves 
Bulychev et al.'s~\cite{bkz07} procedure that runs in $O(|\sra|^2)$
time and space.

\section{Implementation}\label{sec-impl}

\subsection{Data Structures}

$\SSA$ is implemented by exploiting the following 
data structures. 

\begin{itemize}
\item[{\rm (i)}] A state $s$ is represented by a record that contains 
the list $s$.pre of its
  predecessors $\pre(s)$ and a pointer $s$.block to the block $P(s)$
  that contains $s$. 
  The state space $\Sigma$
  is represented as a doubly linked list of states. 

\item[{\rm (ii)}] The states of any block $B$ of the current partition
  $P$ are consecutive in the list $\Sigma$, so that $B$ is represented
  by two pointers begin and end: the first state of $B$ in $\Sigma$ and the
  successor of the last state of $B$ in $\Sigma$, i.e.,
  $B=[B.\text{begin},B.\text{end}[$.  Moreover, $B$ contains a pointer
  $B$.intersection to a block whose meaning is as follows: after a
  call to $\mathit{Split(P,S)}$ for splitting $P$ w.r.t.\ a set of
  states $S$, if $\varnothing \neq B\cap S \subsetneq B$ then
  $B$.intersection points to a block that represents 
$B\cap S$, otherwise $B$.intersection $=
  \textbf{null}$. Finally, the fields localBottoms and
  bottomBlocks for a block $B$ represent, resp., the local bottom
  states of $B$ and the
  bottom blocks of $B$.
The current partition $P$ is stored as a doubly linked list of blocks.  

\item[{\rm (iii)}] The current relation $\tle$ on $P$ is stored as a
  resizable $|P|\times|P|$ boolean matrix $\Rel$: $\Rel(B,C) = \ctt$
  iff $B\tle C$.  Recall \cite[Section~17.4]{cormen} that insert
  operations in a resizable array (whose capacity is doubled as
  needed) take amortized constant time, and a resizable matrix (or
  table) can be implemented as a resizable array of resizable arrays.  
  The boolean matrix $\Rel$ is resized
  by adding a new entry to $\Rel$, namely a new row and a new column,
  for any block $B$ that is split into two new blocks $B\smallsetminus
  S$ and $B\cap S$.

\item[{\rm (iv)}] $\SSA$ additionally stores and maintains 
a resizable integer table Count and a resizable integer matrix BCount.  
Count is indexed over $\Sigma$ and $P$ and has
the following meaning: 
$\text{Count}(s,C)\ud |\{(s,t)~|~D \tle C, t\in D, s\sra t\}|$
while BCount is indexed over $P\times P$ and has the following
meaning:
$\text{BCount}(B,C) \ud
      \textstyle{\sum}_{s\in B} \text{Count}(s,C)$.
The table Count allows to implement the test 
$s\not \in \pre(\mu_\cP (C))$ in constant time  as $\text{Count}(s,C)=0$, 
while BCount allows to implement in constant time the test
$B\not\!\raee \mu_\cP (C)$ as $\text{BCount}(B,C)=0$. 

\end{itemize}

\incmargin{0.2em}
\linesnumbered
\begin{algorithm}[Ht]
\small
\printsemicolon
\SetVline
\Setnlskip{1.5em}
\SetAlTitleFnt{textsc}
\Indm
\KwSty{Precondition: } The list $P$ is stored in reverse topological ordering wrt
$\Rel$

\BlankLine
$\tuple{\KwSty{Block},\KwSty{Block}}$ \FuncSty{$\mathit{FindRefiner}()\; \{$}

\Indp

\cmatrix{\cbool} Refiner\;
\lForAll{$B\in P$}{
  \lForAll{$C\in P$}{Refiner($B$,$C$) $:=$ \cmaybe\;}
}
\ForAll{$C \in P$}{
  \ForAll{$B \in P$ \KwSty{such that} $B\raee C$ }{
    \If{{\rm (Refiner($B$,$C$) $=$ \cmaybe)}}{
      \If{{\rm ($\Rel(C,B) = \cff$)}}{
        \ForAll{{\rm $s\in B$.localBottoms}} {
          \lIf{{\rm (Count$(s,C)=0$)}}{
            \creturn $(B,C)$\;
          }
        }
      }
      \ForAll{{\rm $D \in B$.bottomBlocks}}{
        \lIf{{\rm ($\Rel(C,D) = \cff \;\&\; \text{BCount}(D,C)=0$)}}{\creturn $(B,C)$\;}
      }
    \ForAll{$E\in P$}{
      \lIf{$(\Rel(E,C)=\ctt)$}{Refiner($B$,$E$) $:=$ \cff\;}
    }
  }
}
}
\creturn (\cnull,\cnull)\;

\Indm
\FuncSty{\}}

\caption{$\mathit{FindRefiner}()$ algorithm.}\label{fig-refiner}
\end{algorithm}

\subsection{$\mathit{FindRefiner}$ Algorithm}
The algorithm $\mathit{FindRefiner}()$ in Figure~\ref{fig-refiner} is
an implementation of the characterization of refiners provided by 
Theorem~\ref{th-refiner}. In particular, lines~8-10 implement
condition~(i) of Theorem~\ref{th-refiner} and  lines~11-12 implement 
condition~(ii). The correctness of this implementation depends on the
following key point. Given a pair of blocks $(B,C)\in P\times P$ such
that $B\raee C$, in order to ensure the equivalence: 
$(B,C)\in \Refiner(\cP)$ iff $\text{(i)}\vee \text{(ii)}$,  
Theorem~\ref{th-refiner} requires as hypothesis the following condition: 
$$
\forall (D,E)\in P\times P.\; D\raee E \; \&\; (B,C) \tl (D,E) \; \Ra 
(D,E)\not\in \Refiner(\cP)~~~(*)
$$
In order to ensure this condition $(*)$, we guarantee throughout the
execution of $\SSA$ that the list
$P$ of blocks is stored in reverse topological ordering 
w.r.t.\ $\tle$, so
that if $B \tl B'$ then $B'$ precedes $B$ in the list $P$.
The reverse
topological ordering of $P$ initially holds because the input
PR is  the DAG
$\tuple{P_\ell,\id}$ which is trivially topologically ordered
(whatever the ordering of $P_\ell$ is). 
More in general, for a generic input PR $\tuple{P,\Rel}$ to $\SSA$ 
the function
$\mathit{Initialize}()$ in Figure~\ref{fig-2} in Appendix~\ref{app} achieves 
this reverse topological ordering by a standard algorithm
\cite[Section~22.4]{cormen} that runs in $O(|P|^2)$ time (cf.\
the call $\mathit{TopologicalSort}(P,\Rel)$ in the
$\mathit{Initialize}()$ function).  Then, the reverse topological
ordering of $P$ is always maintained throughout the
execution of $\SSA$. In fact, if the partition $P$ is split w.r.t.\ a
set $S$ and a block $B$ generates two new descendant blocks $B\cap S$
and $B\smallsetminus S$ then our $\mathit{SplittingProcedure}$ in
Figure~\ref{fig-sp} modifies the ordering of the 
list $P$ as follows: $B$ is replaced
in $P$ by inserting $B\cap S$ immediately followed by  $B\smallsetminus
S$. This guarantees that at the exit of 
$\mathit{Refine}(\tuple{P,\Rel},S)$ at line~7 of $\SSA$ 
 the list $P$ is still in reverse topological ordering w.r.t.\
$\Rel$. This is a consequence of the fact that at the exit of
$\mathit{Refine}(\tuple{P,\Rel},S)$, by point~(5) in
Section~\ref{asa}, we have that $\mu_{\tuple{P,\Rel}} (B\cap S) =
\mu_{\tuple{P,\Rel}}(B)\cap S$, i.e., $\mu_{\tuple{P,\Rel}} (B\cap S)
\cap (B\smallsetminus S) = \varnothing$ 
 so that $B\cap S \not\tle  B\smallsetminus S$.
The reverse topological ordering of $P$ w.r.t.\ $\tle$ ensures that 
if $(B,C)
\tl (B',C')$ then 
$(B,C)$ is scanned by $\mathit{FindRefiner}$ after the pair $(B',C')$. Since
$\mathit{FindRefiner}()$ exits as soon as a refiner is found, 
we have that $(B',C')$ cannot
be a refiner, so that condition $(*)$ holds  for $(B,C)$. 
\\
\indent
When $\mathit{FindRefiner}()$ determines that a pair of
 blocks $(B,C)$, with $B\raee C$, is not a refiner, it stores this
 information in a local boolean matrix Refiner that is indexed over
$P\times P$ and initialized to $\cmaybe$. Thus, the meaning of the matrix
Refiner is as follows: if
$\Refiner(B,C) = \cff$ then $(B,C)\not\in \Refiner(\cP)$.
If
$(B,C) \not \in \Refiner(\cP)$ then both (i) and (ii) do not hold, therefore
$\mathit{FindRefiner}()$ executes the for-loop at lines~13-14 so
that any $(B,E)$ with $E\tle C$ is marked as $\text{Refiner}(B,E) =
\cff$. This is correct because if $(B,C)\not\in \Refiner(\cP)$ and
$(B,E) \tle (B,C)$ then $(B,E)\not\in \Refiner(\cP)$: in fact, by
Lemma~\ref{l2}, $\Bottom(\mu_\cP(B)) \subseteq \mu_\cP(C) \cup
\pre(\mu_\cP (C))$, and since $E\tle C$ implies, because $\tle$ is
transitive, $\mu_\cP(C) \subseteq
\mu_\cP (E)$, we have that $\Bottom(\mu_\cP(B)) \subseteq \mu_\cP(E) \cup
\pre(\mu_\cP (E))$, so that, 
by Lemma~\ref{l2}, $(B,E)\not\in \Refiner(\cP)$. 
The for-loop at lines~13-14 is therefore an optimization of 
Theorem~\ref{th-refiner} since it determines that some
pairs of blocks are not a refiner without resorting to the 
condition $\neg\text{(i)}\wedge\neg\text{(ii)}$ of Theorem~\ref{th-refiner}. 
This optimization and the related matrix Refiner turn out to be
crucial for obtaining the overall time complexity of $\SSA$.


\linesnumbered
\begin{algorithm}[Ht]
\small
\SetVline
\Setnlskip{1.5em}
\SetAlTitleFnt{textsc}
\Indm

\KwSty{Precondition: } $\TS(S,\sra,P_\ell)~~~\&~~~\forall x,y\in S.\ P_\ell(x)=P_\ell(y)$

\BlankLine
\clist{\cstate} $\pos$(\clist{\cstate} $S$, \clist{\cstate} $T$) \{

\Indp
\clist{\cstate} $R :=\varnothing$\;
\lForAll{$s \in S$}{mark1$(s)$\;}
\ForAll{$t \in T$}{
  \ForAll{{\rm $s\in \pre(t)$ \KwSty{such that} $\text{marked1}(s)$}}{
    $\text{mark2}(s)$; $R$.append($s$)\;
  }
}
\ForAll{{\rm $y \in S$ \KwSty{backward} \KwSty{such that} $\text{marked2}(y)$}}{
  \ForAll{{\rm $x\in \pre(y)$ \KwSty{such that}
      $\text{marked1}(x)~\&~\text{unmarked2}(x)$}}{
      mark2$(x)$; $R$.append($x$)\;
  }
}
\lForAll{$x \in S$}{ unmark1$(x)$;} \lForAll{$x \in R$}{ unmark2$(x)$\;}
  
\creturn $R$\;

\Indm
\FuncSty{\}}

\caption{Computation of $\pos$.}\label{fig-eu}
\end{algorithm}

\subsection{Computing pos}\label{euc}
Given two lists of states $S$ and $T$, we want to compute the set of
states that belong to $\pos(S,T)$. This can be done by traversing
once the edges of
the transition relation $\sra$ provided that the list $\Sigma$ of states
satisfies the following property: 
\begin{center}
For all $x,y\in \Sigma$, 
if $x$ precedes $y$ in the list $\Sigma$ and $\ell (x)=\ell (y)$ 
then $y\not\!\sra x$. 
\end{center}
We denote this property by 
$\TS(\Sigma,\sra,P_\ell)$. 
Hence, this is a  topological ordering of $\Sigma$ w.r.t.\ the
transition relation $\sra$ that is local to each block of the labeling
partition $P_\ell$. 
As described in
Section~\ref{sec:bs}, as an initial
pre-processing step of $\SSA$, we find and collapse inert s.s.c.'s. 
After this pre-processing step, $\Sigma$ is then
topologically ordered locally to each block of $P_\ell$ in $O(|\Sigma|
+ |\sra|)$ time in order to establish initially $\TS(\Sigma,\sra,P_\ell)$. 
We will see in Section~\ref{sec-sp} 
that while the ordering of the list $\Sigma$ of states
changes across the execution of $\SSA$, the property 
$\TS(\Sigma,\sra,P_\ell)$ is always maintained invariant.

The computation of $\pos(S,T)$ is done by the algorithm in
Figure~\ref{fig-eu}. The result $R$ consists of all the states in $S$
that are marked2. We assume that all the states in $S$ have the same
labeling by $\ell$: this is clearly true when the function $\pos$ is
called from the algorithm $\SSA$. 
The for-loop at lines~5-7 makes  the states in $S\cap
\pre(T)$ marked2.   Then, the for-loop at
lines~8-10 scans backward the list of states $S$ and when a
marked2 state $y$ is encountered then all the states in $S\cap
\pre(y)$ are marked2. It is clear that the property
$\TS(\Sigma,\sra,P_\ell)$ guarantees that this procedure does not
miss states that are in $\pos(S,T)$. 

\linesnumbered
\begin{algorithm}[Htp]
\small
\SetVline
\Setnlskip{1.5em}
\SetAlTitleFnt{textsc}
\Indm

\clist{\cblock} \FuncSty{$\mathit{Split}($\clist{\cblock} $P,$ \clist{\cstate} $S) \;\{$}

\Indp

\clist{\cblock} split\;
\ForAll{$x \in S$}{
  \If{{\rm $(x.\text{block}.\text{intersection} = \cnull)$}}{
    \cblock  ~$B$ $:=$ new \cblock\;
    $x$.block.intersection $:=$ $B$\;
    split.append($x$.block);
  }
  move $x$ in the list $\Sigma$ from $x$.block at the end of $B$\;
  \lIf{{\rm ($x$.block $= \varnothing$)}}{
   $x$.block $:= \text{copy}(B)$; $x$.block.intersection $:= \cnull$\;
 }
}
\ForAll{{\rm $B\in \text{split}$}}{
  \lIf{{\rm $(B$.intersection $= \cnull)$}}{
    split.remove($B$); delete $B$\;
  }
  \lElse{insert $B$.intersection in $P$ in front of $B$\;}
}

\creturn split\;

\Indm
\FuncSty{\}}

\BlankLine
\BlankLine

\cvoid \FuncSty{$\mathit{SplittingProcedure}(\PR\;\tuple{P,\Rel},$
  \clist{\cstate} $S)\;\{$}

\Indp
  \clist{\cblock} split $:=\mathit{Split(P,S)}$\;
  \If{{\rm $(\text{split} \neq \varnothing)$}}{
    resize $\Rel$; \tcp{{\rm update $\Rel$}}
    \lForAll{$B \in P$}{
      \lForAll{{\rm $C \in \text{split}$}}{$\Rel(C.\text{intersection},B) := \Rel(C,B)$\;}
    }
    \lForAll{{\rm $B \in \text{split}$}}{ 
      \lForAll{$C \in P$}{$\Rel(C,B.\text{intersection}) := \Rel(C,B)$\;}
    }
    $\mathit{Update}()$; \tcp{{\rm update Count, BCount, localBottoms, bottomBlocks}}
    \lForAll{$B \in P$}{$B.\text{intersection} := \cnull$\;}
  }
  \Indm
  \FuncSty{\}}

\caption{Splitting Procedure.}\label{fig-sp}
\end{algorithm}

\subsection{$\mathit{SplittingProcedure}$}\label{sec-sp}

$\SSA$ calls $\mathit{SplittingProcedure}(\tuple{P,\Rel},S)$ at line~6 
with the precondition $\TS(\Sigma,\sra,P_\ell)$ and  
needs to maintain this invariant property at the exit (as discussed
in Section~\ref{euc} this is crucial for computing $\pos$).  
This function must modify the current PR $\cP= \tuple{P,\Rel}$ to 
$\cP' = \tuple{P',\Rel'}$ as follows: 
\begin{itemize}
\item[{\rm (A)}] 
$P'$ is the partition obtained by splitting $P$ w.r.t.\ the splitter
$S$; 
\item[{\rm (B)}] $\Rel$ is modified to $\Rel'$ in 
such a way that for any $x\in \Sigma$, $\mu_{\cP'}(P'(x)) =
\mu_{\cP}(P(x))$. 
\end{itemize}
Recall that the states of a block $B$ of 
  $P$ are consecutive in the list $\Sigma$, so that $B$ is represented
  as
  $B=[B.\text{begin},B.\text{end}[$.
An implementation of the splitting operation 
$\mathit{Split}(P,S)$ that only
scans the states in $S$, i.e.\ that takes $O(|S|)$
time, is quite easy and standard (see e.g.\
\cite{gv90,rt07}). However, this operation affects the ordering of the
states in the list 
$\Sigma$ because states are moved from old blocks to newly generated
blocks. It turns out that this splitting operation 
can be implemented in a careful way that preserves  the invariant property
$\TS(\Sigma,\sra, P_\ell)$. The idea is rather simple. Observe that 
the list of states $S = \pos(\mu_\cP(X), \mu_\cP(Y))$ 
can be (and actually is) built as a sublist of $\Sigma$ so that the
following property holds: If $x$
precedes $y$ in $S$ and $P_\ell(x)=P_\ell(y)$ then $y\!\not\!\sra x$.  
The following picture shows the idea of our implementation of
$\mathit{Split}(P,S)$, where states within filled circles determine the
splitter set $S$.

\begin{center}
    \begin{tikzpicture}[scale=0.76]
      \scriptsize
      \tikzstyle{arrow}=[->,>=latex']
      \draw (-0.75,2) node {$\Sigma$};
      \draw (-0.75,0) node {$\Sigma'$};
      \draw (0.5,2.7) node {$B_1$};
      \draw (4.5,2.7)  node {$B_2$};
      \draw (8.5,2.7)  node {$B_3$};
      \draw (4.5,1)  node {$\mathit{Split}(P,S)$};
      \draw (-0.1,-0.7) node {$B_1\!\cap\! S$};
      \draw (1,-0.7)  node {$B_1\!\smallsetminus\! S$};
      \draw (3,-0.7)  node {$B_2\!\cap\! S$};
      \draw (6,-0.7)  node {$B_2\!\smallsetminus\! S$};
      \draw (8,-0.7)  node {$B_3\!\cap\! S$};
      \draw (9.1,-0.7)  node {$B_3\!\smallsetminus\! S$};

      \path 
      (0,2) node[shape=circle,draw,name=0]{0}
      (1,2) node[shape=circle,fill=blue!25,draw,name=1]{1}
      (2,2) node[shape=circle,draw,name=2]{2}
      (3,2) node[shape=circle,fill=blue!25,draw,name=3]{3}
      (4,2) node[shape=circle,fill=blue!25,draw,name=4]{4}
      (5,2) node[shape=circle,draw,name=5]{5}
      (6,2) node[shape=circle,fill=blue!25,draw,name=6]{6}
      (7,2) node[shape=circle,draw,name=7]{7}
      (8,2) node[shape=circle,fill=blue!25,draw,name=8]{8}
      (9,2) node[shape=circle,draw,name=9]{9};
      \path 
      (0,0) node[shape=circle,fill=blue!25,draw,name=00]{1}
      (1,0) node[shape=circle,draw,name=11]{0}
      (2,0) node[shape=circle,fill=blue!25,draw,name=22]{3}
      (3,0) node[shape=circle,fill=blue!25,draw,name=33]{4}
      (4,0) node[shape=circle,fill=blue!25,draw,name=44]{6}
      (5,0) node[shape=circle,draw,name=55]{2}
      (6,0) node[shape=circle,draw,name=66]{5}
      (7,0) node[shape=circle,draw,name=77]{7}
      (8,0) node[shape=circle,fill=blue!25,draw,name=88]{8}
      (9,0) node[shape=circle,draw,name=99]{9};

      \path (0.north west) ++(-0.18,0.18) node[name=b1a]{} (1.south
      east) ++(0.18,-0.18) node[name=b1b]{};
      \path (2.north west) ++(-0.18,0.18) node[name=b2a]{} (7.south
      east) ++(0.18,-0.18) node[name=b2b]{};
      \path (8.north west) ++(-0.18,0.18) node[name=b3a]{} (9.south east) ++(0.15,-0.15) node[name=b3b]{};

     \draw[rounded corners=0pt] (b1a) rectangle (b1b);
     \draw[rounded corners=0pt] (b2a) rectangle (b2b);
     \draw[rounded corners=0pt] (b3a) rectangle (b3b);

      \path (00.north west) ++(-0.18,0.18) node[name=c1a]{} (00.south
      east) ++(0.18,-0.18) node[name=c1b]{};
      \path (11.north west) ++(-0.18,0.18) node[name=c2a]{} (11.south
      east) ++(0.18,-0.18) node[name=c2b]{};
      \path (22.north west) ++(-0.18,0.18) node[name=c3a]{} (44.south
      east) ++(0.18,-0.18) node[name=c3b]{};
      \path (55.north west) ++(-0.18,0.18) node[name=c4a]{} (77.south
      east) ++(0.18,-0.18) node[name=c4b]{};
      \path (88.north west) ++(-0.18,0.18) node[name=c5a]{} (88.south
      east) ++(0.18,-0.18) node[name=c5b]{};
      \path (99.north west) ++(-0.18,0.18) node[name=c6a]{} (99.south
      east) ++(0.18,-0.18) node[name=c6b]{};

     \draw[rounded corners=0pt] (c1a) rectangle (c1b);
     \draw[rounded corners=0pt] (c2a) rectangle (c2b);
     \draw[rounded corners=0pt] (c3a) rectangle (c3b);
     \draw[rounded corners=0pt] (c4a) rectangle (c4b);
     \draw[rounded corners=0pt] (c5a) rectangle (c5b);
     \draw[rounded corners=0pt] (c6a) rectangle (c6b);

  \end{tikzpicture}    
\end{center}
The property $\TS(\Sigma',\sra,P_\ell)$ still holds for the modified list of
states $\Sigma'$. In fact, from the above picture 
observe that it is enough to check that: if $B$ has been split into
$B\cap S$ and $B\smallsetminus S$ by preserving the relative orders of
the states in $\Sigma$ then if $x\in B\cap S$ and $y\in
B\smallsetminus S$ then $y\!\not\!\sra x$. This is true because if
$y\sra x$ and $x
\in S =\pos(\mu_\cP(X), \mu_\cP(Y))$ then, since $x$ and $y$ are in
the same block of $P$ and $\mu_\cP(X)$ is a union of some blocks of
$P$,  by definition of $\pos$ we would also have that
$y\in S$, which is a contradiction.

The functions in Figure~\ref{fig-sp} sketch a pseudo-code that implements
the above described splitting operation (the $\mathit{Update}()$
function is in
Figure~\ref{fig-bis} in Appendix~\ref{app}). The above point~(B), i.e.,
the modification of $\Rel$ to $\Rel'$ so that 
for any $x\in \Sigma$, $\mu_{\cP'}(P'(x)) =
\mu_{\cP}(P(x))$ is straightforward and is implemented at lines~18-20
of  $\mathit{SplittingProcedure}()$. 



\linesnumbered
\begin{algorithm}[htp]
\small
\SetVline
\Setnlskip{1.5em}
\SetAlTitleFnt{textsc}
\Indm

 \cvoid \FuncSty{$\mathit{Refine}(\PR\;\tuple{P,\Rel},$
  \clist{\cstate} $S)\;\{$}

\Indp
\clist{\cblock} $L := \varnothing$\;
\lForAll{{\rm $s \in S$ \KwSty{such that} $\text{unmarked}(s.\text{block})$}}{ 
    $\text{mark}(s.\text{block})$;
    $L.\text{append}(s.\text{block})$\;
  }
\ForAll{$B \in L$}{
  \ForAll{$C \in P$}{
    \If{{\rm $(\Rel(B,C) = \ctt ~\&~ \text{unmarked}(C))$}}{
      $\Rel(B,C) := \cff$\;
 
      \ForAll{$y\in C$}{
        \lForAll{$x\in \pre(y)$}{
          $\text{Count}(x,B)\,$-- --;
          $\text{BCount}(x.\text{block},B)\,$-- --\;
        }
      }

      \lIf{{\rm $(C\in B.\text{bottomBlocks})$}}{$B$.bottomBlocks.erase($C$)\;}
      \ForAll{$y\in C$}{
        \ForAll{{\rm $x\in \pre(y)$}}{
          \If{{\rm $(x.\text{block} \neq B \,\& \Rel(B,x.\text{block})\!= \ctt
              \,\&\, \text{Count}(x,B)=0)$}}{
            $\text{mark2}(x.\text{block})$\; 
            \If{{\rm \text{unmarked2}(x.\text{block})}}{$B$.bottomBlocks.append($x$.block)\;}
          }
        }
      }
    }
  }
}
  \lForAll{$B \in P$}{$\text{unmark}(B)$; $\text{unmark2}(B)$}\;
\Indm
\}

\caption{$\mathit{Refine}$ function.}\label{fig-ref}
\end{algorithm}

\subsection{$\mathit{Refine}$ Function} 

$\SSA$ calls $\mathit{Refine}(\tuple{P,\Rel},S)$ at line~7 with the
precondition that $S$ is a union of  blocks of the current partition
$P$. 
The function $\mathit{Refine}(\tuple{P,\Rel},S)$ in
Figure~\ref{fig-ref} implements the point~(5) of
Section~\ref{asa}. This function must modify the current PR $\cP=\tuple{P,\Rel}$ to
$\cP'=\tuple{P,\Rel'}$ by pruning 
the relation $\Rel$ in such a way that for any $B\in P$:
 $$\mu_{\cP'}(B) = \left\{\begin{array}{ll}
                      \mu_{\cP}(B)\cap S    & \mbox{~~if $B\subseteq S$}\\
                      \mu_{\cP}(B) & \mbox{~~if $B\cap S=\varnothing$}
                      \end{array}\right.
$$
This is done by the $\mathit{Refine}()$ function at lines~5-7
by reducing the relation $\Rel$ to $\Rel'$ as follows: if 
$B,C\in P$ and $\Rel (B,C)=\ctt$ then $\Rel' (B,C)=\cff$
iff  $B\subseteq S$ and $C\cap S = \varnothing$,
while the rest of the code updates the data
structures Count, BCount and bottomBlocks accordingly (note that
localBottoms do not need to be updated).

\subsection{Auxiliary Functions}

It is straightforward to implement the remaining 
functions $\mathit{Initialize}()$ and $\mathit{Image}()$ (these are
given in Figure~\ref{fig-2} in Appendix~\ref{app}). 
It is just worth remarking that in $\mathit{Initialize}()$, 
$\mathit{TopologicalSort}(\Sigma,\sra,P)$ establishes initially the property 
$\TS(\Sigma,\sra,P_\ell)$, while
the call $\mathit{TopologicalSort}(P,\Rel)$ provides an initial reverse topological
order of $P$ w.r.t.\ $\Rel$ when the input partial ordering
$\Rel$ is not the
identity relation $\id$.

\subsection{Complexity}


Time and space bounds for $\SSA$ are as follows. In the following 
statement we assume, as
usual in model checking, that the transition relation $\sra$ is
total, i.e., for any $s\in \Sigma$ there exists
$t\in \Sigma$ such that $s\sra t$, so that the inequalities 
$|\Sigma|\leq |\sra|$ and
$|\Pstsim| \leq |\sra^{\exists}|$ hold and this 
allows us to simplify the expression of the time
bound.  

\begin{theorem}[\textbf{Complexity}]\label{th-complexity}
$\SSA$ runs in 
$O(|\Pstsim|^2(|\sra|+|\Pstsim||\sra^{\exists}|))$-time and
$O(|\Sigma||P_{\mathrm{stsim}}| \log |\Sigma|)$-space.  
\end{theorem}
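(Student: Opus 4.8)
\emph{Proof plan.}
The plan is to establish the time bound from three ingredients — the invariants maintained across the \textbf{while} loop, an $O(|\Pstsim|^2)$ bound on the number of loop iterations, and a per‑iteration cost analysis — with the proviso that the work of $\mathit{SplittingProcedure}$, of $\mathit{Refine}$, and of the local‑bottom scans inside $\mathit{FindRefiner}$ must be charged globally (amortized over the whole run) rather than to a single iteration. The invariants I would use: since $\SSA$ only refines the current partition and terminates with $P=\Pstsim$ (Theorem~\ref{correctness}), at every moment $\Pstsim\preceq P$, hence $|P|\le|\Pstsim|$; and mapping a pair $(B,C)$ with $B\raee C$ in $P$ to $(\Pstsim(x),\Pstsim(y))$ for any witnessing edge $x\sra y$ is injective, so the number of $\raee$‑pairs of $P$ never exceeds that of $\Pstsim$, i.e.\ $|{\raee}|\le|\sra^\exists|$ throughout. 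I also use the two invariants already justified in the body of the paper: $P$ is kept in reverse topological order w.r.t.\ $\tle$, and $\TS(\Sigma,\sra,P_\ell)$ holds — these make $\mathit{Image}$, $\pos$ (Figure~\ref{fig-eu}) and $\mathit{Split}$ correct and let $\mathit{FindRefiner}$ invoke Theorem~\ref{th-refiner}.

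For the iteration count, note that every iteration ends with $\mathit{Refine}(\tuple{P,\Rel},S)$ where $S=\pos(\mu_\cP(B),\mu_\cP(C))$ for a refiner $(B,C)$. Taking a witness $x\sra y$ of $B\raee C$ one sees $x\in B\cap S$, so $B\cap S\ne\varnothing$, and since $(B,C)$ is a refiner $\mu_\cP(B)\not\subseteq S$; hence, by point~(5) of Section~\ref{asa}, $\mu_{\cP''}(B\cap S)=\mu_\cP(B)\cap S\subsetneq\mu_\cP(B)$. As every $\mu$‑value is a union of blocks, at least one entry of the boolean matrix $\Rel$ is turned from $\ctt$ to $\cff$ in that iteration. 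Since, when a block is split, its two children inherit its $\Rel$‑row and $\Rel$‑column, a $\cff$ entry is never re‑created; and because the refinement process forms a binary forest with $|\Pstsim|$ leaves, at most $2|\Pstsim|-1$ blocks ever exist, so at most $(2|\Pstsim|-1)^2=O(|\Pstsim|^2)$ ordered block pairs are ever $\Rel$‑indexed. As each iteration prunes one of them, the loop runs $O(|\Pstsim|^2)$ times.

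Next I would bound the cost of one iteration and then the globally amortized parts. One call of $\mathit{Image}$ costs $O(|\Sigma|)$ and one of $\pos$ costs $O(|\sra|)$. In $\mathit{FindRefiner}$, initializing the local \texttt{Refiner} matrix costs $O(|P|^2)$; over all pairs $(B,C)$ with $B\raee C$, the loop of lines~13--14 and the scan of $B.\text{bottomBlocks}$ each cost $O(|P|)$, for $O(|{\raee}|\,|P|)=O(|\sra^\exists|\,|\Pstsim|)$ in total. Thus the part of an iteration attributable to that iteration is $O(|\sra|+|\Pstsim|^2+|\Pstsim|\,|\sra^\exists|)$, and multiplying by $O(|\Pstsim|^2)$ iterations, using $|\Pstsim|\le|\sra^\exists|$ to absorb $|\Pstsim|^4$ into $|\Pstsim|^3|\sra^\exists|$, already yields $O(|\Pstsim|^2(|\sra|+|\Pstsim|\,|\sra^\exists|))$. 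Two contributions remain to be bounded over the whole run: (1)~all calls to $\mathit{SplittingProcedure}$/$\mathit{Refine}$, and (2)~all scans of $B.\text{localBottoms}$ at lines~8--10 of $\mathit{FindRefiner}$. For (1), a split and the ensuing pruning of $\Rel$, $\Count$ and $\mathrm{BCount}$ touch each transition $x\sra y$ and each block pair a constant number of times per surviving ancestor (along the refinement forest) of the relevant block; since each state or block has $O(|\Pstsim|)$ ancestors, (1)~totals $O(|\Pstsim|\,|\sra|+|\Pstsim|^3)$, dominated by the above. For (2), each individual scan is aborted by \texttt{return} as soon as a refiner is found, and the \texttt{Refiner}‑marking of lines~13--14 together with the reverse topological processing order guarantees that, within one call, $B.\text{localBottoms}$ is scanned only against an antichain of blocks $C$; I would then charge each scanned local bottom either to one of its incident transitions or to a pruned $\Rel$‑entry, which one expects to bound (2)~by $O(|\Pstsim|^2|\sra|)$ as well. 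Summing all contributions gives the time bound.

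For the space bound: among the data structures maintained by $\SSA$, $\Rel$ and $\mathrm{BCount}$ have at most $|P|^2\le|\Pstsim|^2$ entries, $B.\text{localBottoms}$ and $B.\text{bottomBlocks}$ over all blocks use $O(|\Sigma|+|\Pstsim|^2)$ cells, and resizable arrays add only constant amortized overhead; the dominant one is $\Count$, a $|\Sigma|\times|P|$ table whose entries are bounded by $|\sra|\le|\Sigma|^2$ and hence fit in $O(\log|\Sigma|)$ bits, for $O(|\Sigma|\,|\Pstsim|\log|\Sigma|)$ bits overall, which is the claimed bound. The hard part will be step~(2): a naive count gives $\sum_{B:\,B\raee C}|B|\le|P|\,|\Sigma|$ per $\mathit{FindRefiner}$ call, i.e.\ an extra $|\Pstsim|$ factor overall, so the whole time bound really hinges on making the amortized argument for the local‑bottom scans precise — genuinely exploiting the early \texttt{return}, the \texttt{Refiner} matrix, and the reverse topological ordering of $P$ — to spread their total cost over the $O(|\Pstsim|\,|\sra|)$ transition/ancestor incidences and the $O(|\Pstsim|^2)$ pruned relation entries, and in particular checking that this charging survives the partition splits.
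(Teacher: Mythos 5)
Your overall architecture coincides with the paper's: the $O(|\Pstsim|^2)$ bound on the number of while-loop iterations obtained by showing each iteration turns at least one $\Rel$ entry from $\ctt$ to $\cff$ and that $\cff$ entries are inherited by descendants over the at most $2|\Pstsim|-1$ blocks that ever exist; the per-call costs of $\mathit{Image}$, $\pos$ and the $\Refiner$-matrix initialization; the global amortization of $\mathit{SplittingProcedure}$ and $\mathit{Refine}$ over the refinement forest; and the identification of $\Count$ as the dominant $O(|\Sigma||\Pstsim|\log|\Sigma|)$ data structure. (Your figure $O(|\Pstsim||\sra|+|\Pstsim|^3)$ for the amortized cost of $\mathit{Refine}$ is optimistic, since a pruning event $(B,C)$ costs $\Theta(|\pre(C)|)$ and up to $|P|$ such events can hit the same block $C$; but the honest total is still dominated by the claimed bound, so nothing breaks there.)

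The genuine gap is exactly where you flag it: lines~8--10 of $\mathit{FindRefiner}$. Your plan --- ``charge each scanned local bottom either to one of its incident transitions or to a pruned $\Rel$-entry'' --- is not carried out, and the antichain property you extract from lines~13--14 does not by itself suffice: an antichain of blocks $C$ can still be large, and scanning the local bottoms of $B$ in full against each member gives precisely the naive $O(|P||\Sigma|)$ per call that you acknowledge. The missing idea in the paper's proof is a pointwise claim rather than a structural one: when line~10 is reached with $\Refiner(B,C)=\cmaybe$, the reverse topological processing order, the early return upon finding a refiner, and the marking at lines~13--14 together guarantee that $B$ has no $\raee$-edge into any block $D\neq C$ with $C\tle D$; consequently $\Count(s,C)>0$, which a priori only says that $s$ has a successor in $\mu_\cP(C)$, actually forces a transition from $s$ \emph{directly into} $C$. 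Each unit of scan cost with $\Count(s,C)>0$ is thereby charged to a distinct edge of $\sra$ landing in $C$, and at most one scanned state per call has $\Count(s,C)=0$ (it triggers the return), so lines~8--10 cost $O(|\sra|)$ per call of $\mathit{FindRefiner}$ and $O(|\Pstsim|^2|\sra|)$ overall. Without this implication your charging scheme has nowhere to send a local bottom whose only witnessing transitions land in $\mu_\cP(C)\smallsetminus C$, and the extra $|\Pstsim|$ factor you worry about does not go away.
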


\subsection{Adapting SSA for LTSs}
The algorithms $\SSA$ computes the stuttering simulation preorder on
KSs, but it can be modified to work over LTSs by
following the adaptation to LTSs of Groote and Vaandrager's
algorithm~\cite{gv90} for KSs. Due to lack of space the
details are here omitted. We just mention that for
any action $a\in \mathit{Act}$, we have a parametric $\pos_a$
operator for any action $a\in \mathit{Act}$ 
so that the notions of splitting and refinement 
of the current PR are parameterized w.r.t.\ the action $a$.

\section{Conclusion}
We presented an algorithm, called $\SSA$, for computing the stuttering
simulation preorder and equivalence on a Kripke structure or labeled
transition system.  To the
best of our knowledge, this is the first algorithm for computing this
behavioural preorder.  The only available algorithm related to
stuttering simulation is a procedure by Bulychev et al.~\cite{bkz07}
that checks whether a given relation is a stuttering simulation. Our
procedure $\SSA$ includes an algorithm for checking whether a given
relation is a stuttering simulation that significantly
improves Bulychev et al.'s one both in time and in space.

\paragraph{\textbf{Acknowledgements.}} 
This work was partially supported 
by the PRIN 2007 Project ``\emph{AIDA\-2007: Abstract
Interpretation Design and Applications}'' and by the University of
Padova under the Projects ``\emph{Formal
methods for specifying and verifying behavioural properties of
software systems}'' and ``\emph{Analysis, verification and abstract 
interpretation of models for concurrency}''.

\newpage
\appendix 

\section{Appendix}\label{app}

\begin{lemma}
  \label{Sitrans}
At the beginning of any iteration of $\BasicSSA$,  $\StSim$ is a
preorder.  
\end{lemma}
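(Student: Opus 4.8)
The plan is to argue by induction on the number of iterations of the while-loop. The base case is the initialization: $\StSim(x) := P_\ell(x)$ for all $x$. Since $P_\ell$ is the partition induced by $\ell$, we have $y \in \StSim(x)$ iff $\ell(x) = \ell(y)$, which is plainly reflexive (as $\ell(x)=\ell(x)$) and transitive (equality of labels is transitive); symmetry holds too, but only reflexivity and transitivity are needed. So the claimed invariant holds before the first iteration.

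For the inductive step, assume at the start of some iteration that the relation $R \triangleq \{(x,y) \mid y \in \StSim(x)\}$ is a preorder, and that a refiner $(x,y)$ has been found, so that we set $S := \pos(\StSim(x),\StSim(y))$ and update $\StSim(w) := \StSim(w) \cap S$ for every $w \in S$, leaving $\StSim(w)$ unchanged for $w \notin S$. Write $\StSim'$ for the updated family and $R'$ for the corresponding relation. Reflexivity of $R'$ is the easier half: I must check $w \in \StSim'(w)$ for all $w$. If $w \notin S$ this is immediate from the inductive hypothesis since $\StSim'(w) = \StSim(w) \ni w$. If $w \in S$, then $\StSim'(w) = \StSim(w) \cap S$, and since $w \in \StSim(w)$ by hypothesis and $w \in S$, we get $w \in \StSim'(w)$.

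For transitivity, suppose $b \in \StSim'(a)$ and $c \in \StSim'(b)$; I must show $c \in \StSim'(a)$. I would split on whether $a \in S$. If $a \notin S$, then $\StSim'(a) = \StSim(a)$, so $b \in \StSim(a)$; also $c \in \StSim'(b) \subseteq \StSim(b)$, hence $c \in \StSim(a)$ by the inductive transitivity, and since $\StSim'(a) = \StSim(a)$ we are done. If $a \in S$, then $\StSim'(a) = \StSim(a) \cap S$, so $b \in \StSim(a)$ and $b \in S$; then $\StSim'(b) = \StSim(b) \cap S$, so $c \in \StSim(b)$ and $c \in S$; by inductive transitivity $c \in \StSim(a)$, and combined with $c \in S$ this gives $c \in \StSim(a) \cap S = \StSim'(a)$.

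The one subtle point — and the step I expect to be the main obstacle — is the mixed case where $a \notin S$ but $b \in S$: here $b \in \StSim'(a) = \StSim(a)$, and $c \in \StSim'(b) = \StSim(b) \cap S$, so $c \in \StSim(b) \subseteq$-transitively gives $c \in \StSim(a) = \StSim'(a)$, which works smoothly; the genuinely worrying direction would be $a \in S$ but $b \notin S$, since then $b \in \StSim'(a) = \StSim(a)\cap S$ forces $b \in S$, a contradiction, so this case is vacuous. Thus the only thing to verify carefully is that membership $b \in \StSim'(a)$ already propagates the constraint "$b \in S$" whenever $a \in S$, which is exactly the definition of the update; once that observation is in place the case analysis closes. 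I would present the argument as the short induction above, emphasizing that the update is "downward closed" in the sense that restricting $\StSim(w)$ to $S$ for $w \in S$ never breaks the chain condition because any intermediate witness that lies in the restricted set $\StSim'(a)$ is forced to lie in $S$ as well.
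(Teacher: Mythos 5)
Your proof is correct and follows essentially the same route as the paper's: induction on the iterations, with the base case given by the partition $P_\ell$ and the inductive step resting on the key observation that if $a\in S$ then any $b\in\StSim'(a)=\StSim(a)\cap S$ is itself in $S$, so the intersection with $S$ propagates down the chain. The only difference is presentational: you spell out reflexivity and the vacuous mixed case explicitly, whereas the paper dispatches reflexivity as "clear" and goes straight to the transitivity case split on $x\in S$.
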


\begin{proof}
  Initially, $\StSim$ is reflexive and transitive because $\{\StSim(x)\}_{x\in
    \Sigma}$ is a partition. Let us denote by
    $\StSim_i$ the value of $\StSim$ 
at the beginning of the $i$-th iteration of $\BasicSSA$. Then, 
 $$\StSim_{i+1}(x) = \left\{\begin{array}{ll}
                      \StSim_i(x)\cap S    & \mbox{~~if $x\in S$}\\
                      \StSim_i(x) & \mbox{~~if $x\not\in S$}
                      \end{array}\right.
$$
Then, by inductive hypothesis, $\StSim$ is clearly reflexive. Let us
turn on transitivity. 
Consider $z\in \StSim_{i+1}(y)\subseteq \StSim_i(y)$ 
and $y\in \StSim_{i+1}(x)\subseteq \StSim_i(x)$. Then,
by inductive hypothesis, $z\in \StSim_i (x)$. If $x\not \in S$ then 
$\StSim_i (x)=\StSim_{i+1}(x)$ and therefore $z\in \StSim_{i+1} (x)$. 
If, instead, $x\in S$ then $\StSim_{i+1}(x)=\StSim_i(x)\cap S$ and therefore
$y\in S$. Hence, $\StSim_{i+1}(y) = \StSim_i(y) \cap S$ so that $z\in
S$, i.e.\ $z\in \StSim_i(x)\cap S = \StSim_{i+1}(x)$. \qed
\end{proof}

\bigskip
\noindent
\emph{Proof of Lemma~\ref{basicssa}.}
The output relation 
$\StSim$ is a stuttering simulation so
  that $\StSim\subseteq\Rstsim$.
  Thus, we need to prove that $\StSim\supseteq\Rstsim$.  
Let us denote by
    $\StSim_i$ the value of $\StSim$ 
at the beginning of the $i$-th iteration of $\BasicSSA$.
We show by induction on $i$ that $\Rstsim \subseteq 
\StSim_i$. 

\medskip
\noindent
$(i=0)$ $\Rstsim \subseteq \StSim_0$ because $\StSim_0 (x)=P_\ell
(x)$. 

\medskip
\noindent
$(i+1)$ Let us prove that for any $w$,
    $\Rstsim (w) \subseteq \StSim_{i+1}(w)$, where
 $$\StSim_{i+1}(w) = \left\{\begin{array}{ll}
                      \StSim_i(w)\cap S    & \mbox{~~if $w\in S$}\\
                      \StSim_i(w) & \mbox{~~if $w\not\in S$}
                      \end{array}\right.
$$
$S=\pos(\StSim_i(x),\StSim_i(y))$, $x\sra y$ and
$\StSim_i(x)\not\subseteq S$. 

\noindent
If $w\not \in S$ then $\StSim_{i+1}(w)=\StSim_i(w) \supseteq
\Rstsim(w)$. If, instead, $w\in S$ then
$\StSim_{i+1}(w)=\StSim_i(w)\cap S$. By inductive hypothesis,
$\StSim_i(w) \supseteq \Rstsim(w)$, therefore it is enough to show
that $S=\pos(\StSim_i(x),\StSim_i(y)) \supseteq \Rstsim (w)$. Consider
$v\in \Rstsim(w)$. Since $w\in \pos(\StSim_i(x),\StSim_i(y))$, there
exists a path $w=u_0\sra u_1\sra\ldots u_{n-1}\sra u_n$ such that for
any $j\in [0,n)$, $u_j\in \StSim_i(x)$ and $u_n\in \StSim_i(y)$.  It
turns out that any transition $u_j \sra u_{j+1}$ can be lifted to a
path $$w^j_{0}\sra\ldots\sra w^j_{m_j -1}\sra w^j_{m_j}$$ where
$w^j_k\in\Rstsim(u_j)$ when $k\in [0,m_j)$ and
$w^j_{m_j}\in\Rstsim(u_{j+1})$, and in particular $w^0_{0}=v$. 
In fact, consider the first transition $w=u_0 \sra u_1$. Since $v\in
\Rstsim(w)$, there exists $w^0_0,...,w^0_{m_0}$ such that $v=w^0_0 \sra w^0_1 \sra
... \sra w^0_{m_0}$ where $w^0_l \in \Rstsim(u_0)$ for any $l\in [0,m_0)$ and
$w^0_{m_0}\in \Rstsim(u_1)$. Thus, by a simple induction, any transition
$u_j \sra u_{j+1}$ can be
lifted to one such path. Moreover, by induction, for any $j\in [0,n)$, 
$\Rstsim(u_j) \subseteq \StSim_i (u_j)$, while $\Rstsim(u_n) \subseteq
\StSim_i (u_n)$. By Lemma~\ref{Sitrans}, $\StSim_i$ is transitive so
that from $\{u_0,\ldots,u_{n-1}\}\subseteq \StSim_i(x)$ and $u_n\in
\StSim_i(y)$ we obtain that for any $j\in [0,n)$,
$\StSim_i(u_j)\subseteq \StSim_i(x)$ and $\StSim_i(u_n)\subseteq
\StSim_i(y)$. 
The concatenation of the above paths therefore provides a path 
$$v=w_0 \sra w_1 \sra \ldots \sra w_{n-1} \sra w_n$$ 
such that for any $l\in [0,n)$, $w_l \in \StSim_i(x)$ and $w_n\in
\StSim_i(y)$.  
    Consequently, $v\in\pos(\StSim_i(x),\StSim_i(y))$ and this
    concludes the proof.  \qed


\bigskip
\noindent
\emph{Proof of Theorem~\ref{st-char2}.}
$(\Ra)$ If $\mu_\cP$ is a stuttering simulation and $t\in \mu_\cP(s)$
then $\ell(t)=\ell(s)$, i.e., $t\in P_\ell(s)$. 
Moreover,  if $B\raee C$ then there exists $s\in B$ and $s'\in C$
such that $s\sra s'$, so that
$\mu_\cP(s) \subseteq \pos(\mu_\cP(s),\mu_\cP (s'))$. Since
$\mu_\cP(s)=\mu_\cP(B)$ and $\mu_\cP(s')=\mu_\cP(C)$, we have that 
$\mu_\cP(B) \subseteq \pos(\mu_\cP(B),\mu_\cP (C))$. Hence, 
$\refiner(\cP)=\emptyset$. 

\noindent
$(\La)$ Assume that $s\sra s'$ and $t\in \mu_\cP (s)$. Therefore,
$t\in P_\ell(s)$, i.e., $\ell(t)=\ell(s)$. Furthermore,  
$P(s)\raee P(s')$, so that from $\refiner(\cP)=\emptyset$ we obtain  that
$\mu_\cP(P(s)) \subseteq \pos(\mu_\cP (P(s)), \mu_\cP(P(s')))$. 
Since $\mu_\cP(P(s)) = \mu_\cP(s)$ and $\mu_\cP(P(s')) = \mu_\cP(s')$, 
we have that 
$\mu_\cP(s) \subseteq \pos(\mu_\cP (s), \mu_\cP(s'))$, and therefore
$\mu_\cP$ is a stuttering simulation.
\qed

\medskip
\noindent
\emph{Proof of Theorem~\ref{correctness}.}
This is a consequence of the following two facts.
Let $\StSim$ be the current relation in $\BasicSSA$ at the end of some
iteration  and let $\cP=\tuple{P,\tle}$ be the corresponding PR.
\begin{itemize}
\item[{\rm (i)}]  
We have that
$(x,y)\in \Sigma^2$ is a refiner in $\BasicSSA$ iff
  $(P(x),P(y))\in P^2$ is a refiner in $\SSA$. This is true because for any
  $x,y\in \Sigma$, we have that $\StSim(x)\not\subseteq
  \pos(\StSim(x),\StSim(y))$ iff $\mu_\cP(P(x))\not\subseteq
  \pos(\mu_\cP(P(x)),\mu_\cP(P(y)))$.   
\item[{\rm (ii)}] Let $(x,y)\in \Sigma^2$ be a refiner in
  $\BasicSSA$ and
  $S=\pos(\StSim(x),\StSim(y))=\pos(\mu_\cP(P(x)),\mu_\cP(P(x)))$. 
Let $P'=\Split(P,S)$. 
Consider 
$$\StSim'(x) = \left\{\begin{array}{ll}
                      \StSim (x)\cap S    & \mbox{~~if $x\in S$}\\
                      \StSim(x) & \mbox{~~if $x\not\in S$}
                      \end{array}\right.
$$
$$
\mu_{\cP}'(B) = \left\{\begin{array}{ll}
                      \mu_{\cP} (B)\cap S    & \mbox{~~if $B\subseteq S$}\\
                      \mu_\cP(B) & \mbox{~~if $B\cap S=\varnothing$}
                      \end{array}\right.
$$
where $x\in \Sigma$ and $B\in P'$. Then, for any $x\in \Sigma$,
$\StSim'(x) = \mu_{\cP}'(P'(x))$. \qed
\end{itemize}

\bigskip
\noindent
\emph{Proof of Lemma~\ref{l2}.}
Let $\mu = \mu_\cP$ and $(B,C)\in P^2$ such that
$B\raee C$. 
\\
\indent
Assume that $\mu(B) \subseteq \pos(\mu(B),\mu(C))$ and consider 
  $b\in\bottom(\mu(B))$. 
  Then, $b\in \pos(\mu(B),\mu(C))$, so that there exist $x_0,...,x_k\in
  \mu(B)$, with $k\geq 0$, such
  that $b=x_0$, for all $i\in [0,k)$, $x_i\in 
\mu(B)$ and $x_i \sra x_{i+1}$, and $x_k\in \mu(C)$. 
 Since $b\in
  \bottom(\mu(B))$, we have that $b\not\in \pre(\mu(B))$ and therefore
  necessarily
  either $k=0$ or $k=1$. If $k=0$ then 
$b\in \mu(B)\cap
  \mu(C)$. If instead $k=1$ then $b\in \pre(\mu(C))$. Thus, $b\in
  \mu(C) \cup \pre(\mu(C))$.
\\
\indent
  Conversely, assume that
  $\bottom(\mu(B))\subseteq\mu(C)\cup\pre(\mu(C))$ and consider $x\in
  \mu(B)$. If $x\in \bottom(\mu(B))$ then clearly $x\in \pos(\mu(B),\mu(C))$.  
If instead $x\not \in \bottom(\mu(B))$ then $x\in \pre(\mu(B))$, so that  
  there  exists $y\in\mu(B)$ such that $x\sra y$. Again, if
  $y\in\bottom(\mu(B))$ then $y\in\pos(\mu(B),\mu(C))$ and therefore
  we have that
  $x\in\pos(\mu(B),\mu(C))$.  If $y\not\in \bottom(\mu(B))$ then 
  we can go on with this construction.
  Since $\Sigma$ is finite, in this way 
we would obtain a cycle of inert transitions inside $\mu(B)\subseteq
P_\ell (B)$, namely a contradiction. Thus, it must 
exist some $z\in \Bottom(\mu(B))$  such that $x\sra^* z$, and
therefore $x\in\pos(\mu(B),\mu(C))$. Hence, 
  $\mu(B) \subseteq \pos(\mu(B),\mu(C))$. \qed

\bigskip
\noindent
\emph{Proof of Theorem~\ref{th-refiner}.}
Let us first observe that since $\trianglelefteq$ is a preorder, and
therefore transitive, if
$B\trianglelefteq C$ then $\mu_\cP (C)\subseteq \mu_\cP (B)$. \\
  $(\Ra)$ Let us assume that 
  $(B,C)\not\in\Refiner(\cP)$. 	
   If $C\tle B$ then
  both conditions~(i) and (ii) trivially do not hold: for (ii), $D\in 
  \bottomBlock(B)$ implies $C\tle B \tl D$, and therefore $C\tl D$,
 which is in contradiction with
  $C\not\tle D$.
  Thus, assume that $C\not\trianglelefteq B$. Since
  $B\raee C$, by Lemma~\ref{l2}, we have that
  $\Bottom(\mu_\cP (B))\subseteq\mu_\cP (C)\cup\pre(\mu_\cP (C))$. Hence,
  $\Bottom(\mu_\cP (B))\cap B\subseteq(\mu_\cP (C)\cap B)\cup(\pre(\mu_\cP (C))\cap B)=\pre(\mu_\cP (C))\cap B \subseteq \pre(\mu_\cP (C))$, 
  because $C\not\trianglelefteq B$ implies 
  $B\cap\mu_\cP (C)=\varnothing$. 
  Moreover, if $C\not\trianglelefteq D$, then, again by
  Lemma~\ref{l2}, $\bottom(\mu_\cP (B))\cap D\subseteq \pre(\mu(C))\cap D$. 
  If $D\in \bottomBlock(B)$ then $\bottom(\mu_\cP (B))\cap D \neq \varnothing$
  and therefore $\pre(\mu(C))\cap D \neq \varnothing$, i.e., $D\raee \mu_\cP(C)$.

\medskip
\noindent
  $(\La)$ We prove that if (i) and (ii) do not hold then
 $(B,C)\not\in\Refiner(\cP)$. 	 
By Lemma~\ref{l2}, let us show that
  $\bottom(\mu_\cP (B))\subseteq\mu_\cP(C)\cup\pre(\mu_\cP(C))$.  If $C\trianglelefteq B$ then this is trivially true. 
  Thus, let us assume that $C\not\trianglelefteq B$. 
  \begin{align*}
    \bottom(\mu_\cP (B))
    &= \\
    \text{\small[as $\mu_\cP (B) = \cup \{D \in P~|~ B\tle D\}$]\hspace*{20ex}}
    &\\
     \cup \{D\cap\bottom(\mu_\cP (B))~|~ B\tle D\}
    & = \\
    \text{\small[by set theory]\hspace*{20ex}}
    &\\
    (B\cap\bottom(\mu_\cP(B)))\cup & \\
    \cup \{ D\cap \bottom(\mu_\cP (B))~|~ B \tl D, 
    C\not\tle D \} \cup &\\
    \cup \{ D\cap \bottom(\mu_\cP (B))~|~ B \tl D, 
    C\tle D \}
     &=\\
     \text{\small[by definition of $\bottomBlock$]\hspace*{20ex}} &\\   
     (B\cap\bottom(\mu_\cP(B)))\cup &\\ 
     \cup \{ D\cap \bottom(\mu_\cP (B))~|~D \in \bottomBlock(B), C
     \not\tle D\} \cup&\\ 
    \cup \{ D\cap \bottom(\mu_\cP (B)) ~|~ B \tl D, C\tle D \}
     &\subseteq \\
     \text{\small[by conditions~(i) and (ii)]\hspace*{20ex}} &\\   
	\pre(\mu_\cP (C)) \cup & \\
	\cup \{ D\cap \bottom(\mu_\cP (B)) ~|~ D \in
        \bottomBlock(B),C\not\tle D, D \raee \mu_\cP (C) \} \cup &\\
    \cup \{ D\cap \bottom(\mu_\cP (B)) ~|~ B \tl D, 
    C\tle D \} & 
    \subseteq \\
    \text{\small[because $C\tle D \;\Ra\; D\subseteq \mu_\cP (C)$]\hspace*{20ex}} &\\
    \pre(\mu_\cP (C)) \cup & \\
	\cup \{ D\cap \bottom(\mu_\cP (B)) ~|~ D \in \bottomBlock(B),C\not\tle D, D \raee \mu_\cP (C) \} \cup&\\ 
    \mu_\cP (C). & \\
    \end{align*}
\noindent 
Consider now $D\in \bottomBlock(\cP)$, $C\not\tle D$ and $D\raee \mu_\cP (C)$. 
Then, 
\begin{align*}
D\cap \bottom(\mu_\cP (B)) & = \\
D \cap \mu_\cP (B) \cap \neg \pre(\mu_\cP (B)) &=  \text{~~~\small[as 
$D\subseteq \mu_\cP (B)$]}\\
D \cap \neg \pre(\mu_\cP (B)) &\subseteq  \text{~~~\small[as $D\subseteq \mu_\cP (D) \subseteq \mu_\cP (B)$]}\\
\mu_\cP (D) \cap \neg \pre(\mu_\cP (D)) &= \\
\bottom(\mu_\cP (D)) &. \\
\end{align*}
Since $D\raee \mu_\cP (C)$, there exists $C\tle E$ such that $D\raee E$. 
Since $(B,C) \tle (D,E)$ and $D\raee E$, by hypothesis, 
$(D,E) \not\in \Refiner (\cP)$, so that, by Lemma~\ref{l2}, we have that
$\bottom(\mu_\cP (D)) \subseteq \mu_\cP (E) \cup 
\pre(\mu_\cP (E)) \subseteq \mu_\cP (C) \cup \pre(\mu_\cP (C))$. 
Thus, summing up, it turns out that 
$ \bottom(\mu_\cP (B)) \subseteq \mu_\cP (C) \cup \pre(\mu_\cP (C))$, so that,
by Lemma~\ref{l2}, $(B,C)\not\in \refiner(\cP)$. \qed

\bigskip
\noindent
\emph{Proof of Theorem~\ref{th-complexity}.}
\textbf{Time Complexity.} 
The time complexities of the various functions that are called by
$\SSA$ are as follows. 
\begin{itemize}
\item[--] $\mathit{Initialize()}$ takes $O(|P||\sra|)$ time. 
\item[--] $\mathit{FindRefiner()}$ takes 
$O(|P|^2 + |\sra|+|P||\sra^{\exists}|)$ time. This bound is computed
as follows. 
Line~4 takes $O(|P|^2)$ time. Lines~5-6 take $O(|\sra| + |P|^2)$ time.
Note that lines~5-6 are actually implemented as follows: 
%
\vspace*{-20pt}
\linesnotnumbered
\restylealgo{plain}
\begin{algorithm}
\small
\ForAll{$C\in P$}{
  \lForAll{$y\in C$}{
    \lForAll{$x\in \pre(y)$}{
      $\text{mark}(x.block)$\;
    }
    }
    \ForAll{{\rm $B\in P$ \KwSty{such that} $\text{marked}(B)$}}{
\tcp{{\rm main body of $\mathit{FindRefiner()}$}}      
}
\lForAll{$B\in P$}{$\text{unmark}(B)$\;}
}
\end{algorithm}
\vspace*{-20pt}

Lines~11-12 and 13-14 take
 $O(|P||\sra^{\exists}|)$ time. 
 The estimate of the 
overall cost of lines~7-10
deserves special care. At line~10, it turns out that 
$\text{Count}(s,C)>0 \Lra s\raee C$:  if $\text{Count}(s,C)>0$ at line 10 then
$s\raee \cup \{E\in P~|~ E\tle C\}$. However, 
as a consequence of the code at lines~13-14, it turns out that when we are
at line~10, namely when Refiner$(B,C)=\cmaybe$, it is true that 
$\{E\in P~|~ E\tle C\}= \{C\}$ so that $s\raee C$. 
Hence, the overall cost of lines~7-10 is $\sum_{C\in P}\sum_{B\in P}
|\{(x,y)~|~x\in B, y\in C, x\sra y\}| \leq |\sra|$. 

\item[--] $\mathit{Image}(\tuple{P,\Rel},B)$ takes $O(|\Sigma|)$ time. 
\item[--] $\pos(S,T)$ takes $O(|\Sigma| + |\sra|)$ time. 

\item[--] $\mathit{SplittingProcedure}(\tuple{P,\Rel},S)$ takes 
$O(|P||\Sigma|)$ time. In particular, $\mathit{Split}(P,S)$
takes $O(|S|)$ time. 

\item[--] $\mathit{Refine}(\tuple{P,\Rel},S)$ takes 
$O(|S|+|\{B\in P~|~B\subseteq S\}|(|P|+ |\sra|))$  time.

\end{itemize}

\noindent
Let us prove that the overall number of newly generated blocks by
$\mathit{SplittingProcedure}()$ at line~6 of $\SSA$ is
$2(|P_{\mathrm{stsim}}|-|P_{\ell}|)$.  Let $\{P_i\}_{i\in [0,n]}$ be
the sequence of partitions computed by $\SSA$ where $P_0$ is the
initial partition $P_{\ell}$, $P_n$ is the final partition $\Pstsim$
and for all $i\in [0,n-1]$, $P_{i+1} \preceq P_{i}$. The number of
newly generated blocks by one splitting operation that refines $P_i$
to $P_{i+1}$ is given by $2(|P_{i+1}| - |P_i|)$. Thus, the overall
number of newly generated blocks is $\sum_{i=0}^{n-1} 2(|P_{i+1}| -
|P_i|) = 2(|P_{\mathrm{stsim}}|-|P_{\ell}|)$.

\noindent
It turns out that the overall number of iterations of the main
while-loop of $\SSA$ is in $O(|\Pstsim|^2)$.  If at some iteration of
$\SSA$ it happens that line~7 of $\mathit{Refine}()$ sets $\Rel(B,C)
:= \cff$ for some blocks $B$ and $C$ then for all the successive
iterations of $\SSA$, for any block $D$ which is contained in $B$
(namely, which is a descendant of $B$) and for any block $E$ which is
contained in $C$, and for all the successive iteratuons we will have
that $\Rel(D,E) = \cff$.  Moreover, at any iteration of $\SSA$, there
exist at least two blocks $B,C\in P$ such that the assignment
$\Rel(B,C):= \cff$ at line~7 of $\mathit{Refine}()$ is executed. Since
for any block $B$, the assignment $\Rel(B',C) :=\cff$ for some
$B'\subseteq B$ and for some $C$ may happen at most $|\Pstsim|$ times,
we obtain that the overall number of iterations is in
$O(|\Pstsim|^2)$.

\noindent
Hence,  the overall 
time complexities of the functions called within the main while-loop of
$\SSA$ are as follows:\\[5pt]
--~$\mathit{FindRefiner}()$: $O(|\Pstsim|^2(|\Pstsim|^2 +
  |\sra|+|\Pstsim||\sra^{\exists}|))$;
\\
--~$\mathit{Image}(\tuple{P,\Rel},B)$: $O(|\Pstsim|^2 |\Sigma|)$;
\\
--~$\pos(S,T)$: $O(|\Pstsim|^2(|\Sigma| + |\sra|))$;
\\ 
--~$\mathit{SplittingProcedure}(\tuple{P,\Rel},S)$:
  $O(|\Pstsim|^3 |\Sigma|)$. 
\\[5pt]
The analysis of the overall time complexity of
  $\mathit{Refine}(\tuple{P,\Rel},S)$ needs the following observation.
  As observed above, if at some iteration of $\SSA$ it happens that
line~7 of $\mathit{Refine}()$ sets $\Rel(B,C) := \cff$ for some blocks $B$ and $C$ 
then for all the successive iterations of
  $\SSA$, for any block $D$ which is contained in $B$  and
  for any block $E$ which is contained in $C$, we will have that $\Rel(D,E)
  = \cff$. Thus, for a given block $B$, if the test
  $\Rel(C,B)=\ctt$ at line~6 of $\mathit{Refine}()$ is true then for any block
  $C'$ which is descendant of $C$, the test $\Rel(C',B)=\ctt$ 
will be false. This
  means that for any given block $B$, the body at lines~7-16 of the
  if-then statement at line~6 will be executed at most $|\Pstsim|$
  times. Therefore, the overall time complexity in $\SSA$ of lines~3
  and~17 of
  $\mathit{Refine}(\tuple{P,\Rel},S)$ is $O(|\Pstsim|(1 + |\sra| +
  |\Pstsim| + |\sra|))=O(|\Pstsim|^2 + |\Pstsim||\sra|)$. Since the
  overall cost of lines~2-7 and 18 is $O(|\Pstsim|^2 (|\Sigma| +
  |\Pstsim|^2))$, it turns out that the overall cost of
  $\mathit{Refine}(\tuple{P,\Rel},S)$ is $O(|\Pstsim| (|\sra| +
  |\Pstsim||\Sigma| + |\Pstsim|^3))$.

\noindent
Summing up, the overall time complexity of $\SSA$ is 
$$O(|\Pstsim|^2(|\Sigma|+|\sra|+|\Pstsim|^2 +
|\Pstsim||\sra^\exists|)).$$
If $\sra$ is total then $|\Sigma|\leq |\sra|$ and $|\Pstsim|\leq
|\sra^\exists|$ so that the time complexity of $\SSA$ simplifies to 
$O(|\Pstsim|^2(|\sra|+
|\Pstsim||\sra^\exists|))$.

\medskip
\noindent
\textbf{Space Complexity.} 
The space complexity of $\SSA$ 
is in $O(|\Sigma||\Pstsim|\log|\Sigma|)$ because:

\begin{itemize}
\item[--] The pointers from any state $s\in
\Sigma$ to the block $P(s)$ of the current partition are
stored in 
$O(|\Sigma|\log|\Pstsim|)$ space.
\item[--] The lists localBottoms and bottomBlocks globally take,
  respectively, $O(|\Pstsim||\Sigma|)$ and $O(|\Pstsim|^2)$ space. 
\item[--] The current partition $P$ is stored in $O(|\Pstsim|)$ space.
\item[--] The current relation $\Rel$ is stored in $O(|\Pstsim|^2)$ space.
\item[--] The resizable tables Count and BCount take,
  respectively, $O(|\Sigma||\Pstsim|\log |\Sigma|)$ and
  $O(|\Pstsim|^2\log |\Sigma|)$ space.
\item[--] The local table Refiner in function $\mathit{FindRefiner}()$
  takes $O(|\Pstsim|^2)$. 
\qed
\end{itemize}

\decmargin{0.7em}
\linesnotnumbered
\begin{algorithm}[Htp]
\small
\SetVline
\SetAlTitleFnt{textsc}
\Indm

\FuncSty{$\mathit{Initialize()}\;\{$}

\Indp
$\mathit{CollapseSSC}(\Sigma,\sra,P)$\;
$\mathit{TopologicalSort}(\Sigma,\sra,P)$\;
$\mathit{TopologicalSort}(P,\Rel)$\;
\tcp{{\rm initialize Count}}
\ForAll{$y\in \Sigma$}{
  \ForAll{$x\in \pre(y)$}{
    \ForAll{$C\in P$}{\lIf{{\rm $(\Rel(y.\text{block},C) = \ctt)$}}{$\text{Count}(x,C)$++\;}}
  }
}
\tcp{{\rm initialize BCount}}
\ForAll{$C \in P$}{
      \lForAll{$x\in \Sigma$}{BCount$(x.\text{block},C)$ +$=$ Count$(x,C)$\;}
}
\tcp{{\rm initialize localBottoms and bottomBlocks}}
\ForAll{$B\in P$}{
    \lIf{{\rm $(\exists x\in B.\; \text{Count}(x,B) = 0)$}}{$B.\text{localBottoms}.\text{append}(x)$\;}
  \ForAll{$C\in P$ \KwSty{such that} $C\neq B$}{
    \If{{\rm $(\Rel(C,B) = \ctt ~\&~\exists x\in C.\; \text{Count}(x,B)=0)$}}
  {$B.\text{bottomBlocks}.\text{append}(C)$\;}
}
}

\Indm
\FuncSty{\}}

\BlankLine
\BlankLine
\clist{\cstate} \FuncSty{$\mathit{Image}(\PR\;\tuple{P,\Rel},$
  \cblock~$B)\;\{$}

\Indp
\clist{\cstate} $R := \varnothing$\;
  \ForAll{{\rm $C\in P$ \KwSty{such that} $(\Rel(C,B) = \ctt)$}}{
      \lForAll{{\rm $x\in C$}}{
        $R.\text{append}(x)$\;
      }
    }
\creturn $R$\;

\Indm
\FuncSty{\}}

\caption{$\mathit{Initialize}()$ and $\mathit{Image}()$ Functions.}\label{fig-2}
\end{algorithm}

\linesnotnumbered
\begin{algorithm}[!tp]
\small
\SetVline
\SetAlTitleFnt{textsc}
\Indm

\FuncSty{$\mathit{Update()}\;\{$}

\Indp
resize Count, BCount\;
    \ForAll{$B \in \text{split}$}{
      \ForAll{$x\in \Sigma$}{
        Count$(x,B.\text{intersection}):=\text{Count}(x,B)$\;}
      $B$.localBottoms $:=B$.localBottoms $\cap\, B$\;            
      $B$.intersection.localBottoms $:=$

      \hspace*{20ex}$B$.localBottoms $\cap\, B$.intersection\;
      \ForAll{$C \in P$}{
        \cint $k := \text{BCount}(B,C)$\; 
        BCount$(B,C):=0$\;
        \ForAll{$x\in B$}{$\text{BCount}(B,C)$
          +$= \text{Count}(x,C)$\;}
        BCount$(B.\text{split},C):= k ~\text{--}~ \text{BCount}(B,C)$\;
      }
    }

    \ForAll{$B\in P$}{
      \ForAll{{\rm $C\in B.\text{bottomBlocks}$ \KwSty{such that} 
          \hspace*{5.5ex}$C.\text{intersection} \neq \cnull$}}{
        \If{{\rm $(\forall x\in C.\; \text{Count}(x,B)>0)$}}
        {
          $B.\text{bottomBlocks}.\text{remove}(C)$\;
          $B.\text{bottomBlocks}.\text{append}(C.\text{intersection})$\;
        }
        \Else{
          \If{{\rm $(\exists x\in C.\text{intersection}.\; \text{Count}(x,B)=0)$}}
          {$B.\text{bottomBlocks}.\text{append}(C.\text{intersection})$\;}
        }
      }
    }
    \Indm
\FuncSty{\}}

\caption{$\mathit{Update}()$ Function.}\label{fig-bis}
\end{algorithm}

\end{document}